\documentclass[journal]{IEEEtran}

\usepackage[T1]{fontenc}

\usepackage{amsmath}
\usepackage[cmintegrals]{newtxmath}
\usepackage{bm}
\usepackage{mathtools}
\usepackage{nccmath}

\newtheorem{theorem}{Theorem}
\newtheorem{lemma}{Lemma}



\usepackage{graphicx}
\usepackage{stfloats}
\usepackage{float}

\usepackage{tikz}
\usetikzlibrary{arrows.meta,calc}

\usepackage{algorithm}
\usepackage{algpseudocode}

\algrenewcommand\algorithmicrequire{\textbf{Input:}}
\algrenewcommand\algorithmicensure{\textbf{Output:}}

\usepackage{tabularx}
\usepackage[inline]{enumitem}

\usepackage[sort,compress]{cite}

\usepackage{xspace}
\usepackage{xcolor}
\usepackage{setspace}
\usepackage{soul}
\usepackage[normalem]{ulem}

\usepackage[user]{zref}
\usepackage[framemethod=tikz]{mdframed}

\usepackage{subcaption}

\usepackage{balance}

\usepackage[hidelinks]{hyperref}

\def\BibTeX{{\rm B\kern-.05em{\sc i\kern-.025em b}\kern-.08em
		T\kern-.1667em\lower.7ex\hbox{E}\kern-.125emX}}

\newcommand{\sot}[1]{}

\newcounter{revc}

\makeatletter

\zref@newprop{revcontent}{}
\zref@addprop{main}{revcontent}

\zref@newprop{revsec}{}
\zref@addprop{main}{revsec}

\zref@newprop{revpage}{}
\zref@addprop{main}{revpage}

\newcommand{\revi}[2]{%
	\zref@setcurrent{revsec}{\thesection}%
	\zref@setcurrent{revpage}{\thepage}%
	\zref@setcurrent{revcontent}{#2}%
	\refstepcounter{revc}%
	\label{#1}%
	\zlabel{#1}%
	\textcolor{blue}{#2}%
}

\newcommand{\revinu}[2]{%
	\zref@setcurrent{revsec}{\thesection}%
	\zref@setcurrent{revcontent}{#2}%
	\refstepcounter{revc}%
	\zlabel{#1}%
	\label{#1}%
	#2%
}

\newcommand{\revr}[2]{%
	\zref@setcurrent{revsec}{\thesection}%
	\zref@setcurrent{revcontent}{#2}%
	\refstepcounter{revc}%
	\zlabel{#1}%
	\label{#1}%
	\sot{#2}%
}

\makeatother

\expandafter\def\expandafter\quote\expandafter{%
	\quote
	\onehalfspacing
	\fontsize{12}{14}\selectfont
}

\definecolor{mycolor}{rgb}{0.122,0.435,0.698}

\newmdenv[
innerlinewidth=0.5pt,
roundcorner=4pt,
linecolor=mycolor,
innerleftmargin=6pt,
innerrightmargin=6pt,
innertopmargin=6pt,
innerbottommargin=6pt
]{mybox}

\begin{document}

\title{Energy Efficient Multi-User Beamforming and 3D Position Optimization for SIM-Assisted UAVs}  
\author{Chandan~Kumar~Sheemar,~\IEEEmembership{Member,~IEEE}, Giovanni Iacovelli,~\IEEEmembership{Member,~IEEE},\\ Sourabh Solanki,~\IEEEmembership{Member,~IEEE},  Wali Ullah Khan,~\IEEEmembership{Member,~IEEE}, \\George C. Alexandropoulos,~\IEEEmembership{Senior~Member,~IEEE}, and Symeon Chatzinotas,~\IEEEmembership{Fellow,~IEEE}   
 \thanks{C. K. Sheemar, G. Iacovelli, W. U. Khan, and S. Chatzinotas are with the SnT, University of Luxembourg, (emails:\{chandankumar.sheemar,\hspace{0pt}giovanni.iacovelli,\hspace{0pt}waliullah.khan,\hspace{0pt}symeon.chatzinotas\}@uni.lu). S. Solanki is with National Institute of Technology Warangal, 506004, India (e-mail: ssolanki@nitw.ac.in). G. C. Alexandropoulos is with the Department of Informatics and Telecommunications, National and Kapodistrian University of Athens, 16122 Athens, Greece (email: alexandg@di.uoa.gr).}
} 
 \maketitle
 
\begin{abstract}
This paper studies energy-efficient downlink multi-user transmissions with unmanned aerial vehicle (UAV) communication systems equipped with stacked intelligent metasurfaces (SIM), enabling wave-domain analog beamforming through multiple cascaded metasurface layers, while low-dimensional digital precoding is carried out using a limited number of transmit radio-frequency chains. This architecture enables flexible electromagnetic wave manipulation with reduced hardware complexity, making it particularly suitable for energy-constrained aerial platforms. We formulate a hardware-aware energy-efficiency (EE) maximization problem aiming to jointly optimize the digital precoder, the phase shifts of all SIM layers, and the three-dimensional UAV position under transmit-power, SIM operation, and UAV deployment constraints. The resulting problem is highly non-convex due to the fractional objective, the cascaded SIM structure and the unit-modulus phase constraints of the constituent metasurface layers, as well as the non-linear UAV-dependent channel. To address these challenges, we develop a transform-based alternating optimization framework that combines Dinkelbach’s method, dual and quadratic transforms, o enable closed-form digital beamforming, Riemannian manifold optimization for SIM phase shifts, and successive convex approximation (SCA) for UAV positioning. Convergence and complexity analyses are provided to characterize the proposed algorithm. The presented numerical results showcase that the proposed joint design significantly improves EE compared with fully digital and maximum ratio transmission benchmark schemes, while revealing important design trade-offs among transmit power, SIM size, and the number of its constituent stacked layers.
\end{abstract}
\begin{IEEEkeywords}
UAV, SIM, energy efficiency, wave-domain signal processing, multi-user communications.
\end{IEEEkeywords}

\IEEEpeerreviewmaketitle

\section{Introduction} \label{Intro}

\IEEEPARstart{U}{nmanned} aerial vehicles (UAVs) are a promising enabler for next-generation wireless networks due to their flexible deployment, strong line-of-sight (LoS) connectivity, and rapid on-demand coverage capabilities. These features make UAVs particularly attractive for scenarios such as temporary hotspot coverage, emergency communications, and dynamic network densification. However, supporting multi-user communications with large antenna arrays onboard UAVs remains a major challenge. In particular, fully digital beamforming requires a dedicated radio frequency (RF) chain per antenna element, leading to high hardware complexity, power consumption, and payload requirements \cite{liu2023deployment,sheemar2025joint,sheemar2022practical,sheemar2024parallel}. Such requirements are difficult to accommodate in UAV platforms, where size, weight, and energy constraints are critical. This motivates the development of lightweight and energy-efficient transmission architectures tailored to aerial systems \cite{sheemar2026joint}.

In this context, reconfigurable metasurfaces have recently emerged as a transformative technology for wireless communications \cite{khan2024reconfigurable,li2025ris}. By employing meta elements with tunable electromagnetic responses, metasurfaces enable dynamic control over the amplitude, phase, and polarization of incident waves, allowing programmable manipulation of wireless propagation environments \cite{liu2021reconfigurable,elmossallamy2020reconfigurable}. Early works have focused on single-layer reflecting intelligent  surfaces (RIS), which can reconfigure signal reflections to enhance coverage and link quality \cite{huang2019reconfigurable,sheemar2023full,khan2024beyond,sheemar2023irs}. However, single-layer architectures are inherently limited in their signal processing capabilities, as they can only apply element-wise phase shifts without enabling more advanced wave transformations.

To overcome the inherent limitations of single-layer metasurfaces, stacked intelligent metasurfaces (SIMs) have recently been introduced as a powerful and flexible extension of conventional metasurface architectures \cite{sheemar2026survey}. Unlike traditional metasurfaces that apply a single stage of phase manipulation to the impinging electromagnetic wave, a SIM comprises multiple closely spaced and electromagnetically coupled metasurface layers. Each layer consists of a large number of nearly passive meta elements, whose responses can be individually programmed to impose controllable phase shifts on the incident signal. The proximity between layers allows the electromagnetic wave to undergo successive transformations as it propagates through the stack, resulting in a cascaded wave processing effect. This multi-layer structure enables SIMs to perform multi-stage analog signal processing directly in the electromagnetic domain. In particular, the wave interactions across the layers can be interpreted as a sequence of linear transformations, where each metasurface layer acts as a diagonal phase-shifting matrix, while the free-space propagation between layers introduces structured coupling matrices. By appropriately configuring these layers, the overall SIM architecture can approximate a wide class of linear precoding and combining operations, effectively emulating high-dimensional digital beamforming functionalities. This perspective establishes SIMs as a form of wave-based analog computing platform, capable of implementing complex signal processing tasks with minimal hardware overhead.

As a consequence, SIMs can support advanced transmission functionalities such as directional analog beamforming, spatial multiplexing for multi-user communications, and interference suppression through constructive and destructive wave superposition. Unlike conventional hybrid or fully digital beamforming architectures, these operations are realized intrinsically through the physics of wave propagation and scattering, rather than through power-hungry digital processing chains. Importantly, since the majority of signal manipulation is performed in the passive metasurface domain, the number of required RF chains can be drastically reduced, often to a value comparable only to the number of data streams or users. From a system-level perspective, this reduction in RF hardware translates directly into lower circuit power consumption, reduced heat dissipation, and lighter system weight, all of which are critical advantages for practical deployments. Furthermore, the reduction of complex digital signal processing units and high-resolution data converters simplifies the overall system design and improves energy efficiency (EE). Therefore, SIM-based architectures offer a compelling alternative to conventional fully digital or hybrid systems, particularly in scenarios where power, cost, and form-factor constraints are stringent. These benefits make SIMs especially attractive for emerging applications such as UAV-assisted communications, where lightweight, EE, hardware costs, and high-performance transmission solutions are essential \cite{an2023stacked,an2024stacked_WC,papazafeiropoulos2025ergodic_outage_3}.

\subsection{State-of-the-Art and Motivation}
Recent advances in SIMs have rapidly expanded across multiple dimensions of wireless system design. A fundamental challenge in SIM-assisted systems lies in accurate channel estimation under high-dimensional wave-domain transformations and limited RF chains. Model-driven approaches such as subspace estimation \cite{yao2024channel} and LoS-aware MMSE designs under Rician fading \cite{papazafeiropoulos2025channel} have demonstrated that embedding propagation structure into the estimation process significantly improves accuracy. To further exploit the intrinsic multilayer coupling of SIMs, tensor-based representations \cite{ginige2025nested} and hybrid wave-domain/digital estimation frameworks \cite{an2024hybrid} have been proposed, enabling joint estimation of channel and metasurface responses. In parallel, sparsity-driven techniques tailored to mmWave and near-field regimes, including angular-domain compressed sensing \cite{yao2024sparse} and polar-domain Bayesian learning \cite{yao2025sparse}, effectively leverage structured propagation characteristics to reduce training overhead. Complementing these model-based approaches, learning-driven methods such as FlatCE-Net \cite{dong2026deep} and convolutional architectures \cite{lawal2025channel} have shown strong robustness in high-dimensional scenarios where statistical channel knowledge is unavailable. Collectively, these works highlight that accurate CSI acquisition in SIM systems can be attained to enable wave-domain signal processing.

On the performance optimization side, a growing body of literature has established both analytical foundations and practical designs for SIM-assisted communications. Early works on achievable rate optimization under statistical channel state information (CSI) \cite{papazafeiropoulos2024achievable} and fully programmable wave-domain transceivers \cite{papazafeiropoulos2024achievable_2} demonstrated that SIMs can emulate high-dimensional precoding with significantly reduced RF complexity. Analytical studies, including ergodic mutual information and outage characterizations \cite{papazafeiropoulos2025ergodic_outage_3} as well as capacity bounds for holographic multiple-input-multiple-output (MIMO) \cite{papazafeiropoulos2025ergodic_HMIMO}, have provided key insights into the impact of stacking depth and channel conditions on system performance. Practical optimization frameworks based on alternating optimization and successive convex approximation (SCA) \cite{bahingayi2025refined} have enabled efficient joint design of SIM configurations and digital beamforming. Moreover, hybrid architectures combining passive and active metasurface layers \cite{darsena2025design,darsena2024downlink} have been shown to mitigate propagation losses and further enhance achievable rates. Importantly, recent scaling analyses \cite{bahingayi2025scaling,papazafeiropoulos2025performance} indicate that increasing the number of SIM layers consistently improves interference suppression and throughput, challenging the conventional notion of diminishing returns in metasurface design.

Emerging works have also addressed practical constraints and system-level considerations. Statistical CSI-based designs \cite{xia2025statistical} reduce the burden of instantaneous channel acquisition, while impairment-aware optimization \cite{rezvani2025uplink} accounts for hardware non-idealities inherent to metasurface implementations. Fairness-oriented frameworks, including max--min optimization \cite{ginige2025max} and joint fairness--throughput tradeoff designs \cite{fang2025stacked}, demonstrate that SIMs provide additional degrees of freedom to balance user performance without significant spectral efficiency loss.

Despite the rapid progress in SIM-enabled wireless systems, the integration of SIMs with UAV communications has received relatively limited attention, with only a few studies such as \cite{fan2025joint,zarini2025orchestration} addressing this emerging topic. In \cite{fan2025joint}, the authors investigate an uplink SIM-assisted UAV system and propose a joint optimization framework for user association, UAV placement, and SIM phase shifts combining. Similarly, \cite{zarini2025orchestration} considers the joint design of SIM configuration and UAV control based on a learning-driven approach. However, both works primarily focus on sum-rate maximization, do not consider a scalable hybrid transceiver design and overlook the EE considerations, which are particularly critical in SIM-assisted UAV systems due to the limited onboard battery capacity, stringent payload and hardware constraints, and the additional power consumption associated with UAV propulsion and RF processing. Moreover, the adoption of hybrid beamforming architectures with SIMs further motivates energy-efficient design, since part of the beamforming functionality can be shifted from power-hungry digital RF chains directly to low-power electromagnetic wave-domain processing.

 \subsection{Main Contributions}

To address this research gap, we develop a unified optimization framework for the downlink of a multi-user UAV system equipped with SIM. In the considered architecture, SIM enables analog wave-domain beamforming for multi-user communications through cascaded metasurface layers, while a low-dimensional digital precoder is employed at the RF-chain side. Unlike conventional UAV beamforming designs that mainly focus on spectral efficiency, we formulate a hardware-aware EE maximization problem that jointly accounts for radiated transmit power, RF-chain power, baseband processing power, element-wise SIM control power, and additional control circuitry. The resulting formulation jointly optimizes the digital beamformer, the phase shifts of all SIM layers, and the three-dimensional UAV position under transmit-power, unit-modulus, and UAV deployment constraints.

To solve the resulting highly non-convex optimization problem, we propose a transform-based alternating optimization framework. First, Dinkelbach's method is applied to convert the fractional EE objective into a subtractive form. Then, dual and quadratic transforms are used to decouple the logarithmic signal-to-interference-plus-noise (SINR) terms, enabling efficient block-wise updates. For fixed SIM phases and UAV position, the digital beamformer is obtained in closed form from the Karush-Kuhn-Tucker (KKT) conditions, with a bisection search used to satisfy the transmit-power constraint. The SIM phase shifts are optimized through Riemannian manifold ascent over the product of complex unit circles, while the UAV position is updated using a successive convex approximation (SCA) method based on analytically derived channel gradients. We also provide convergence and complexity analyses, showing that the proposed algorithm monotonically improves the transformed objective and converges to a stationary solution. Simulation results confirm that the proposed joint design achieves significant energy-efficiency gains over fully digital and SIM-assisted maximum ratio transmission (MRT) benchmarks, while also revealing key design tradeoffs with respect to transmit power, SIM size, and number of stacked layers.

 \emph{Notations:} Scalars, vectors, and matrices are denoted by italic, bold lowercase, and bold uppercase letters, respectively. $(\cdot)^T$, $(\cdot)^H$, and $(\cdot)^*$ denote transpose, Hermitian transpose, and complex conjugate, respectively. $\|\cdot\|_2$ denotes the Euclidean norm, while $\|\cdot\|_F$ denotes the Frobenius norm. $\mathbb{E}[\cdot]$ denotes statistical expectation, and $\mathcal{CN}(0,\sigma^2)$ represents a circularly symmetric complex Gaussian distribution with zero mean and variance $\sigma^2$. The operator $\mathrm{diag}(\cdot)$ maps a vector to a diagonal matrix, and $\mathrm{Tr}(\cdot)$ denotes the trace of a matrix. The real part of a complex quantity is denoted by $\Re\{\cdot\}$. The notation $[\mathbf{x}]_m$ denotes the $m$-th element of vector $\mathbf{x}$, and $\mathbf{I}$ denotes the identity matrix of appropriate dimension.

\emph{Paper Organization:} The rest of the paper is organized as follows. Section \ref{sec2} introduces the system model and formulates the energy-efficiency maximization problem. Section \ref{sec3} presents the proposed optimization framework, including the transformation techniques and iterative algorithm. Section \ref{sec4} provides numerical results and discussions, and Section \ref{sec5} concludes the paper.

\section{System Model and Problem Formulation} \label{sec2}

 \begin{figure}[t]
\centering
\resizebox{0.95\linewidth}{!}{
\begin{tikzpicture}[line cap=round,line join=round,>=Latex]

\definecolor{beamblue}{RGB}{190,232,248}
\definecolor{usergreen}{RGB}{118,198,190}
\definecolor{userpurple}{RGB}{191,167,215}
\definecolor{userred}{RGB}{245,124,110}
\definecolor{simone}{RGB}{244,190,165}
\definecolor{simtwo}{RGB}{190,222,185}
\definecolor{simthree}{RGB}{238,232,154}
\definecolor{groundfill}{RGB}{232,249,252}

\fill[groundfill,opacity=0.75] (0,-4.1) ellipse (6.9 and 0.95);
\draw[cyan!35,dashed,line width=1.1pt] (0,-4.1) ellipse (6.9 and 0.95);

\node[font=\Large] at (-1.95,2.75) {UAV};

\begin{scope}[shift={(0,3.08)},scale=1.02]
    \draw[gray!55,fill=gray!12,line width=0.7pt] (-0.38,0.0) rectangle (0.38,0.17);
    \draw[gray!70,line width=0.8pt] (-0.1,0.0) -- (-0.22,-0.33);
    \draw[gray!70,line width=0.8pt] (0.1,0.0) -- (0.22,-0.33);
    \draw[gray!70,line width=0.8pt] (-0.22,-0.33) -- (-0.34,-0.33);
    \draw[gray!70,line width=0.8pt] (0.22,-0.33) -- (0.34,-0.33);
    \draw[gray!60,fill=cyan!15,line width=0.6pt] (-0.12,-0.05) rectangle (0.12,-0.22);

    \draw[gray!60,line width=1pt] (-0.38,0.1) -- (-1.05,0.34);
    \draw[gray!60,line width=1pt] (0.38,0.1) -- (1.05,0.34);
    \draw[gray!60,line width=1pt] (-0.38,0.1) -- (-0.92,-0.13);
    \draw[gray!60,line width=1pt] (0.38,0.1) -- (0.92,-0.13);

    \foreach \x/\y in {-1.05/0.34,1.05/0.34,-0.92/-0.13,0.92/-0.13}{
        \draw[cyan!35,fill=cyan!10,opacity=0.9] (\x,\y) ellipse (0.42 and 0.08);
        \draw[gray!65,fill=gray!25] (\x,\y) circle (0.055);
    }
\end{scope}

\draw[gray!65,line width=1.1pt] (-0.55,2.90) -- (-1.15,2.00);
\draw[gray!65,line width=1.1pt] (0.55,2.90) -- (1.15,2.00);
\draw[gray!60,line width=0.9pt] (-0.22,2.88) -- (-0.22,2.00);
\draw[gray!60,line width=0.9pt] (0.22,2.88) -- (0.22,2.00);

\node[font=\Large] at (-2.6,1.6) {SIM};

\draw[fill=simone,draw=gray!65,line width=0.7pt] (-2.0,1.85) rectangle (2.15,2.00);
\draw[fill=simtwo,draw=gray!65,line width=0.7pt] (-2.0,1.62) rectangle (2.15,1.77);
\draw[fill=simthree,draw=gray!65,line width=0.7pt] (-2.0,1.39) rectangle (2.15,1.54);

\fill[beamblue,opacity=0.70] (-0.95,1.39) .. controls (-1.75,0.4) and (-3.35,-2.15) .. (-4.15,-3.4)
                             .. controls (-3.35,-3.15) and (-1.15,0.55) .. (-0.35,1.39) -- cycle;

\fill[beamblue,opacity=0.70] (0.0,1.39) .. controls (-0.45,0.10) and (-0.55,-2.25) .. (-0.10,-3.45)
                            .. controls (0.55,-2.25) and (0.45,0.10) .. (0.18,1.39) -- cycle;

\fill[beamblue,opacity=0.70] (0.95,1.39) .. controls (1.75,0.4) and (3.35,-2.15) .. (4.15,-3.4)
                            .. controls (3.35,-3.15) and (1.15,0.55) .. (0.35,1.39) -- cycle;

\newcommand{\phone}[4]{
\begin{scope}[shift={(#1,#2)}]
    \draw[fill=#3,draw=gray!55,line width=0.8pt,rounded corners=2pt]
        (-0.33,-0.55) rectangle (0.33,0.55);
    \draw[black!55,line width=1.2pt] (-0.12,0.43) -- (0.12,0.43);
    \node[font=\Large] at (0,-0.95) {#4};
    \draw[black!75,line width=1.2pt] (0,0.65) -- (0,1.02);
    \fill[black!75] (0,1.03) circle (0.035);
    \draw[black!75,line width=0.8pt] (-0.18,0.75) arc (145:215:0.22);
    \draw[black!75,line width=0.8pt] (0.18,0.75) arc (35:-35:0.22);
    \draw[black!75,line width=0.8pt] (-0.30,0.68) arc (145:215:0.37);
    \draw[black!75,line width=0.8pt] (0.30,0.68) arc (35:-35:0.37);
\end{scope}
}

\phone{-3.75}{-3.45}{usergreen}{User 1}
\phone{0}{-4.05}{userpurple}{{User 2}}
\phone{3.75}{-3.45}{userred}{User K}

\draw[black!75,dash dot,line width=1pt] (-2.95,-3.88) -- (-0.65,-4.18);
\draw[black!75,dash dot,line width=1pt] (0.65,-4.18) -- (2.95,-3.88);

\end{tikzpicture}
}

\caption{SIM-assisted UAV communications.}
\label{fig1}
\end{figure}

We consider the downlink of a multi-user communication system, comprising a UAV serving $K$ single-antenna ground users, as illustrated in Fig.~\ref{fig1}. The UAV acts as an aerial transmitter and is equipped with SIM, which enables flexible wave-domain manipulation of electromagnetic signals. The transmitter adopts a hybrid beamforming architecture that combines digital precoding in the baseband domain with analog beamforming implemented via SIM.  

Let $N$ denote the number of transmit radio-frequency (RF) chains at the UAV's transmitter. To simplify the system design and avoid additional digital multiplexing overhead, we assume $N=K$, i.e., the number of RF chains is equal to the number of users. The SIM consists of $L$ metasurface layers indexed by $\mathcal{L}=\{1,\ldots,L\}$, where each layer is composed of $M$ passive meta-atoms arranged in a uniform planar array (UPA). We assume $M\geq N$ to guarantee sufficient spatial degrees of freedom for beam shaping and interference suppression. Let $\mathcal{M}=\{1,\ldots,M\}$ denote the index set of meta-atoms. Specifically, the transmission coefficient of the $m$-th meta-atom on the $\ell$-th metasurface layer is modelled as
$
\phi_{\ell,m}=e^{j\theta_{\ell,m}}$, where $\theta_{\ell,m}\in[0,2\pi),
$ is the controllable phase shift.  For compact representation, we define
$
\boldsymbol{\phi}_\ell=
[\phi_{\ell,1},\phi_{\ell,2},\ldots,\phi_{\ell,M}]^T
$
and
$
\mathbf{\Phi}_\ell=\mathrm{diag}(\boldsymbol{\phi}_\ell)\in\mathbb{C}^{M\times M},
$
which represent the transmission coefficient vector and diagonal phase-shift matrix of the $\ell$-th SIM layer, respectively.

All metasurface layers are assumed to follow an identical lattice structure and are modeled as UPAs \cite{sheemar2026survey}, as shown in Fig.~\ref{fig:sim_geometry}. The spatial configuration of the meta-atoms determines the electromagnetic coupling across the SIM. In particular, the distance between the $m$-th and $\tilde m$-th meta-atoms on the same transmit metasurface is given by
$
r_{m,\tilde m}= \Delta \sqrt{(m_z-\tilde m_z)^2+(m_x-\tilde m_x)^2},
$
where $\Delta$ denotes the inter-element spacing. The indices of the $m$-th meta-atom are expressed as
$
m_z=\left\lceil\frac{m}{m_{\max}}\right\rceil,
$
and
$
m_x=\mathrm{mod}(m-1,m_{\max})+1,
$
where $m_{\max}$ is the number of elements along one dimension of the UPA, such that $M=m_{\max}^2$. This indexing facilitates mapping between the one-dimensional element index and the two-dimensional grid structure. The metasurface layers are assumed to be parallel and equally spaced. Let $D_t$ denote the total thickness of the SIM and $d_t=D_t/(L-1)$ the spacing between adjacent layers. The propagation distance between the $\tilde m$-th meta-atom on layer $\ell-1$ and the $m$-th meta-atom on layer $\ell$ is therefore
$
r_{m,\tilde m}^{(\ell)}=\sqrt{r_{m,\tilde m}^2+d_t^2},
$
with $\ell\in\mathcal{L}\setminus\{1\}$. 

\begin{figure}[t]
\centering

\begin{subfigure}[t]{0.32\textwidth}
\centering
\resizebox{\linewidth}{!}{%
\begin{tikzpicture}[>=Latex, font=\Large]

\definecolor{darkgrayline}{RGB}{45,45,45}
\definecolor{panelgray}{RGB}{232,232,232}
\definecolor{atomgray}{RGB}{185,190,198}
\definecolor{softgray}{RGB}{245,245,245}

\tikzset{
  atom/.style={draw=darkgrayline, line width=1pt, fill=atomgray,
               minimum width=9mm, minimum height=9mm, inner sep=0pt},
}

\node[anchor=west] at (0,6.25) {The $\ell$-th SIM layer};

\draw[fill=panelgray, draw=darkgrayline, line width=0.8pt] (0.2,0.6) rectangle (6.2,5.6);

\node[atom] at (1.0,4.9) {$1$};
\node[atom] at (2.2,4.9) {$2$};
\node at (3.6,4.9) {$\cdots$};
\node[atom] (topright) at (5.4,4.9) {$m_{\max}$};

\node[atom] at (1.0,3.6) {};
\node[atom] at (2.2,3.6) {};
\node at (3.6,3.6) {$\cdots$};
\node[atom] at (5.4,3.6) {};

\node at (1.0,2.6) {$\vdots$};
\node at (2.2,2.6) {$\vdots$};
\node at (5.4,2.6) {$\vdots$};

\node[atom] at (1.0,1.4) {};
\node[atom] at (2.2,1.4) {};
\node at (3.6,1.4) {$\cdots$};
\node[atom] at (5.4,1.4) {$M$};

\draw[densely dashed, ->, line width=1.0pt, darkgrayline] (topright.east) -- ++(1.1,0)
  node[right] {$m_{\max}$};

\draw[densely dashed, darkgrayline] (1.0,0.55) -- (1.0,0.05);
\draw[densely dashed, darkgrayline] (2.2,0.55) -- (2.2,0.05);
\draw[<->, line width=1.0pt, darkgrayline] (1.0,0.25) -- (2.2,0.25)
  node[midway, below] {$\Delta$};

\end{tikzpicture}%
}
\caption{The $\ell$-th SIM layer comprising $M=m_{\max}^2$ meta-atoms.}
\label{fig_SIM_conf_1}
\end{subfigure}
\hfill
\begin{subfigure}[t]{0.44\textwidth}
\centering
\resizebox{\linewidth}{!}{%
\begin{tikzpicture}[>=Latex, font=\Large]

\definecolor{darkgrayline}{RGB}{45,45,45}
\definecolor{layergray}{RGB}{220,220,220}
\definecolor{atomgray}{RGB}{185,190,198}
\definecolor{feedgray}{RGB}{80,80,80}

\tikzset{
  layerline/.style={draw=darkgrayline, line width=1.25pt},
  port/.style={draw=darkgrayline, line width=1.3pt, fill=white},
  atom/.style={draw=darkgrayline, line width=1.2pt, fill=atomgray,
               minimum width=9mm, minimum height=9mm, inner sep=0pt},
  dashedbox/.style={draw=darkgrayline, line width=1.1pt, dash pattern=on 7pt off 5pt},
}

\node[font=\Large] at (7.2,6.75) {TX-SIM};

\draw[layerline] (0.7,5.6) -- (2.0,6.1);
\draw[layerline] (0.7,2.2) -- (2.0,1.7);
\draw[layerline] (2.0,6.1) -- (3.0,6.1);
\draw[layerline] (2.0,1.7) -- (3.0,1.7);

\foreach \y/\lab in {5.2/1,4.3/2,2.4/N}{
  \draw[port] (1.0,\y) circle (0.16);
  \node[anchor=east] at (0.7,\y) {$\lab$};
}
\node at (0.85,3.35) {$\vdots$};
\node[anchor=west] at (0.25,6.25) {RF feeds};

\draw[layerline, fill=layergray!25] (3.0,1.7) rectangle (12.8,6.1);

\foreach \x in {4.6,6.2,7.8,9.4,11.0}{
  \draw[layerline] (\x,1.7) -- (\x,6.1);
}

\node[below] at (3.0,1.7) {$1$};
\node[below] at (6.2,1.7) {$\ell-1$};
\node[below] at (7.8,1.7) {$\ell$};
\node[below] at (12.8,1.7) {$L$};

\draw[feedgray,line width=1pt,->] (2.2,5.2) -- (3.0,5.2);
\draw[feedgray,line width=1pt,->] (2.2,4.3) -- (3.0,4.3);
\draw[feedgray,line width=1pt,->] (2.2,2.4) -- (3.0,2.4);

\draw[dashedbox] (5.4,2.3) rectangle (8.6,5.5);

\node[atom] (mtil) at (6.2,4.9) {$\tilde m$};
\node[atom] (mL)   at (6.2,2.9) {$m$};
\node[atom] (mR)   at (7.8,2.9) {$m$};

\draw[densely dashed, darkgrayline] (5.75,4.8) -- (5.75,3.0);
\draw[<->, line width=1.0pt, darkgrayline] (5.75,4.7) -- (5.75,3.1);
\node[anchor=east] at (5.65,3.9) {$r_{m,\tilde m}$};

\draw[densely dashed, darkgrayline] (mtil.south east) -- (mR.north west);
\node[anchor=west] at (6.95,3.95) {$r_{m,\tilde m}^{(\ell)}$};

\draw[<->, line width=1.0pt, darkgrayline] ($(mL.south)+(0,-0.35)$) -- ($(mR.south)+(0,-0.35)$)
  node[midway,below] {$d_t$};

\node at (10.2,3.9) {$\cdots$};

\end{tikzpicture}%
}
\caption{The overall transmit SIM structure and its inter-layer propagation geometry.}
\label{fig_SIM_conf}
\end{subfigure}

\caption{The considered TX-SIM at the UAV.}
\label{fig:sim_geometry}
\end{figure}

To accurately model the electromagnetic wave propagation across layers, we adopt the Rayleigh--Sommerfeld diffraction formulation. Accordingly, the propagation coefficient between these two meta-atoms is given by \cite{an2023stacked}
\begin{equation}
w_{m,\tilde m}^{\ell}=
\frac{A_t\cos\chi_{m,\tilde m}^{\ell}}{r_{m,\tilde m}^{(\ell)}}
\left(
\frac{1}{2\pi r_{m,\tilde m}^{(\ell)}}
-\frac{j}{\lambda}
\right)
e^{j\frac{2\pi}{\lambda}r_{m,\tilde m}^{(\ell)}} ,
\end{equation}
where $A_t$ denotes the physical area of each meta-atom, $\lambda$ is the carrier wavelength, and $\chi_{m,\tilde m}^{\ell}$ is the angle between the propagation direction and the surface normal. This expression captures both amplitude decay and phase accumulation as waves propagate through the layered structure. Let $\mathbf{W}_1\in\mathbb{C}^{M\times N}$ denote the propagation matrix from the $N$ RF feeds to the first metasurface layer, which accounts for feed-to-surface coupling. For $\ell\ge2$, let $\mathbf{W}_\ell\in\mathbb{C}^{M\times M}$ denote the transmission matrix between consecutive metasurface layers, whose entries are given by $w_{m,\tilde m}^{\ell}$. By cascading the effects of all layers, the overall SIM analog precoder can be expressed as
$
\boldsymbol{\Psi}
=
\mathbf{\Phi}_L\mathbf{W}_L
\cdots
\mathbf{\Phi}_2\mathbf{W}_2
\mathbf{\Phi}_1\mathbf{W}_1
\in\mathbb{C}^{M\times N}.
$
This cascaded structure highlights the wave-domain processing capability of the SIM, where each layer progressively reshapes the electromagnetic field. Let $\mathbf{s}\in\mathbb{C}^{K\times1}$ denote the transmitted symbol vector satisfying $\mathbb{E}[\mathbf{s}\mathbf{s}^H]=\mathbf{I}$. Digital beamforming is performed using the precoding matrix
$
\mathbf{V}\in\mathbb{C}^{N\times K}.
$
After digital precoding and SIM processing, the transmitted signal becomes
$
\mathbf{x} =\boldsymbol{\Psi}\mathbf{V}\mathbf{s}.
$

We assume perfect channel state information (CSI). Let $\mathbf{h}_k(\mathbf{q})\in\mathbb{C}^{M\times1}$ denote the channel between the SIM output aperture and user $k$, where $\mathbf{q} =[x_u,y_u,z_u]^T$ represents the UAV position and $\mathbf{u}_k =[x_k,y_k,0]^T$ denotes the $k$-th user position. The UAV altitude introduces an additional degree of freedom that impacts both large-scale fading and angular characteristics of the channel. Under the far-field assumption, the channel is modeled as a Rician fading channel that captures both line-of-sight (LoS) and scattered components. Let $d_k(\mathbf{q}) = \| \mathbf{q}-\mathbf{u}_k \|_2$ denote the distance between the UAV and user $k$. Then, the channel coefficient associated with the $m$-th meta-atom is
\begin{equation} \label{canale}
h_{k,m}(\mathbf{q})=
\sqrt{\beta_k(\mathbf{q})}
\left(
\sqrt{\frac{K_k}{K_k+1}}\, a_m(\vartheta_k,\varphi_k)
+
\sqrt{\frac{1}{K_k+1}}\, z_{k,m}
\right),
\end{equation}
where $\beta_k(\mathbf{q})=\beta_0 d_k(\mathbf{q})^{-\alpha}$ models large-scale path loss with reference gain $\beta_0$ and path-loss exponent $\alpha$. The parameter $K_k$ denotes the Rician factor, which quantifies the relative strength of the LoS component. The term $a_m(\vartheta_k,\varphi_k)$ represents the $m$-th element of the SIM array response toward user $k$, determined by the azimuth and elevation angles $(\vartheta_k,\varphi_k)$, while $z_{k,m}\sim\mathcal{CN}(0,1)$ models the non-LoS scattering component. Accordingly, the channel vector is expressed as
$
\mathbf{h}_k(\mathbf{q})=
\left[
h_{k,1}(\mathbf{q}),\,
h_{k,2}(\mathbf{q}),\,
\ldots,\,
h_{k,M}(\mathbf{q})
\right]^T
\in\mathbb{C}^{M\times1}.
$ The received signal at user $k$ is therefore
\begin{equation} \label{received_signal}
r_k=\mathbf{h}_k^H(\mathbf{q})\boldsymbol{\Psi}\mathbf{V}\mathbf{s}+n_k ,
\end{equation}
where $n_k\sim\mathcal{CN}(0,\sigma_k^2)$ denotes additive white Gaussian noise (AWGN). The received signal consists of the desired signal component as well as multi-user interference caused by imperfect spatial separation. Let $\mathbf{v}_k$ denote the digital beamformer of the $k$-th user. The resulting SINR of user $k$ is given by
\begin{equation}
\gamma_k(\mathbf{V},\{\mathbf{\Phi}_\ell\},\mathbf{q})=
\frac{|\mathbf{h}_k^H(\mathbf{q})\boldsymbol{\Psi}\mathbf{v}_k|^2}
{\sum_{i\neq k}|\mathbf{h}_k^H(\mathbf{q})\boldsymbol{\Psi}\mathbf{v}_i|^2+\sigma_k^2},
\end{equation}
which explicitly depends on the digital precoder $\mathbf{V}$, the SIM phase shifts $\{\mathbf{\Phi}_\ell\}$, and the UAV position $\mathbf{q}$. This coupling highlights the joint design nature of the system, where electromagnetic wave manipulation, signal processing, and UAV deployment must be optimized in a unified framework.

 \subsection{Problem Formulation}

For the considered SIM-assisted hybrid beamforming architecture, the achievable sum rate of the system is defined as
\begin{equation}
R(\mathbf{V},\{\mathbf{\Phi}_\ell\},\mathbf{q})
=
\sum_{k=1}^{K}\log_2\!\Big(1+\gamma_k(\mathbf{V},\{\mathbf{\Phi}_\ell\},\mathbf{q})\Big),
\end{equation}
where $\gamma_k(\mathbf{V},\{\mathbf{\Phi}_\ell\},\mathbf{q})$ denotes the SINR of $k$-th user. While maximizing the sum rate is a common objective, it does not explicitly account for the power consumption associated with both signal transmission and hardware operation. To consider that effect, the total power consumption for signal processing is modeled as
$
P_{\rm tot}(\mathbf{V},\{\mathbf{\Phi}_\ell\})
=
\xi\,\mathrm{Tr}\!\left(\boldsymbol{\Psi}\mathbf{V}\mathbf{V}^H\boldsymbol{\Psi}^H\right)
+
P_{\rm c},
$
where the first term represents the effective transmit power after SIM processing, scaled by the inverse power-amplifier efficiency $\xi\ge 1$. The trace term $\mathrm{Tr}(\boldsymbol{\Psi}\mathbf{V}\mathbf{V}^H\boldsymbol{\Psi}^H)$ corresponds to the total radiated power at the SIM output aperture. The second term, $P_{\rm c}$, accounts for the circuit power consumption, which is independent of the instantaneous transmit signal but depends on the system architecture. To explicitly capture the hardware cost of the SIM-based hybrid beamforming structure, the circuit power is modeled as
$
P_{\rm c}= N P_{\rm RF}+P_{\rm BB}+LM P_{\rm E}+P_{\rm ctrl},
$
where $P_{\rm RF}$ denotes the power consumption of each RF chain, $P_{\rm BB}$ is the power required for baseband signal processing, and $P_{\rm E}$ represents the control power consumed by each meta-atom. The term  $P_{\rm E}$ denotes the per element power consumption of SIM and $LM P_{\rm E}$ counts for the total power consumed to tune all the SIM elements.  Finally, $P_{\rm ctrl}$ accounts for additional power consumption due to supporting circuitry such as controllers, biasing networks, and signal routing components. 

Based on the above definitions, the system EE is defined as the ratio between the achievable sum rate and the total power consumption, given by
\begin{equation}
\eta(\mathbf{V},\{\mathbf{\Phi}_\ell\},\mathbf{q})
=
\frac{B  R(\mathbf{V},\{\mathbf{\Phi}_\ell\},\mathbf{q})}
{P_{\rm tot}(\mathbf{V},\{\mathbf{\Phi}_\ell\})},
\end{equation}
where $B$ denotes the system bandwidth. This metric, measured in bits/Joule, quantifies how efficiently the system converts consumed energy into useful information transmission.
Accordingly, the joint design of digital beamforming, SIM phase shifts, and UAV placement can be formulated as the following EE maximization problem:
 \begin{subequations}\label{eq:ee_problem}
\begin{align}
\max_{\mathbf{V},\,\{\mathbf{\Phi}_\ell\},\,\mathbf{q}}
\quad &
\eta(\mathbf{V},\{\mathbf{\Phi}_\ell\},\mathbf{q})
\label{eq:ee_problem_obj}
\\
\text{s.t.}\quad
&
\mathrm{Tr}\!\left(\boldsymbol{\Psi}\mathbf{V}\mathbf{V}^H\boldsymbol{\Psi}^H\right)\le P_{\max},
\label{eq:ee_problem_power}
\\
&
\theta_{\ell,m}\in[0,2\pi), \qquad \forall \ell\in\mathcal{L},\, m\in\mathcal{M},
\label{eq:ee_problem_phase}
\\
&
\mathbf{q} \in \mathcal{Q}, \label{eq:ee_problem_q}
\end{align}
\end{subequations}
where $\mathcal{Q}
=
\left\{
\mathbf{q}\in\mathbb{R}^3:
l_{u,\min}\le l_u\le l_{u,\max},
\right\}$, denotes a feasible region for UAV placement with $l \in \{x,y,z\}$.
Constraint~\eqref{eq:ee_problem_power} limits the maximum transmit power available at the SIM output, ensuring compliance with practical hardware and regulatory constraints. Constraint~\eqref{eq:ee_problem_phase} enforces the unit-modulus property of the SIM elements, reflecting the phase-only control capability of passive meta-atoms. Finally, constraint \eqref{eq:ee_problem_q} restricts the UAV position within a predefined three-dimensional region, accounting for operational limits such as altitude regulations and coverage requirements.

Problem~\eqref{eq:ee_problem} is highly non-convex and challenging to solve due to several coupled factors. First, the objective function is fractional, involving the ratio of a non-concave sum-rate function and a non-linear power consumption model. Second, the cascaded structure of the SIM introduces a multiplicative coupling across multiple layers, making the optimization variables strongly interdependent. Third, the wireless channel depends nonlinearly on the UAV position $\mathbf{q}$ through both distance-dependent path loss and angular variations. Finally, the unit-modulus constraints on the SIM coefficients render the feasible set non-convex. These challenges necessitate the development of efficient decomposition and approximation techniques, which will be addressed in the subsequent sections.

  \section{Proposed EE Optimization Framework} \label{sec3}

To solve Problem~\eqref{eq:ee_problem}, we develop a transform-based alternating optimization framework that integrates fractional programming, auxiliary-variable transformations, Riemannian optimization, and SCA. The main challenge arises from the fractional structure of the objective function, the strong coupling among optimization variables, and the non-convex constraints imposed by the SIM architecture and UAV deployment. The proposed approach systematically decomposes the original problem into tractable subproblems that can be solved efficiently in an iterative manner.

\subsection{Fractional Programming Reformulation}

We first address the fractional objective using Dinkelbach’s method \cite{dinkelbach1967nonlinear}. Introducing an auxiliary parameter $\eta \geq 0$, define
\begin{subequations}
\begin{equation}\label{eq:f_eta}
\mathcal{F}(\eta)
=
\max_{\mathbf{V},\,\{\mathbf{\Phi}_\ell\},\,\mathbf{q}}
\;
B R(\mathbf{V},\{\mathbf{\Phi}_\ell\},\mathbf{q})
-\eta P_{\rm tot}(\mathbf{V},\{\mathbf{\Phi}_\ell\}),
\end{equation}
    \begin{equation}
    \text{s.t.}\quad  \eqref{eq:ee_problem_power}-\eqref{eq:ee_problem_q}.
\end{equation}
\end{subequations} 

\begin{theorem}
Assume $R(\cdot)\ge 0$ and $P_{\rm tot}(\cdot)>0$ over the feasible set. Then, the optimal EE $\eta$ satisfies $\mathcal{F}(\eta)=0$, and can be obtained via
\begin{equation}\label{eq:dinkelbach_update}
\eta^{(t+1)}=
\frac{
B R(\mathbf{V}^{(t)},\{\mathbf{\Phi}_\ell^{(t)}\},\mathbf{q}^{(t)})
}{P_{\rm tot}(\mathbf{V}^{(t)},\{\mathbf{\Phi}_\ell^{(t)}\})}.
\end{equation}
\end{theorem}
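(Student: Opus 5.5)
We will follow the classical Dinkelbach argument, working on the feasible set $\mathcal{S}$ defined by \eqref{eq:ee_problem_power}--\eqref{eq:ee_problem_q} and writing $\mathbf{x}=(\mathbf{V},\{\mathbf{\Phi}_\ell\},\mathbf{q})$ for a feasible point.

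\textbf{Step 1: the optimal value is a root of $\mathcal{F}$.} Let $\eta^\star=\sup_{\mathbf{x}\in\mathcal{S}} BR(\mathbf{x})/P_{\rm tot}(\mathbf{x})$. We first check that this supremum is attained. The phase set is a product of circles, hence compact. The UAV region $\mathcal{Q}$ is a box, hence compact. The digital precoder $\mathbf{V}$ is not bounded a priori: the power constraint only bounds $\boldsymbol{\Psi}\mathbf{V}$. However, the SINRs and $P_{\rm tot}$ depend on $\mathbf{V}$ only through $\boldsymbol{\Psi}\mathbf{V}$. We may therefore restrict $\mathbf{V}$ to the orthogonal complement of $\ker\boldsymbol{\Psi}$, or equivalently optimize over $\mathbf{X}=\boldsymbol{\Psi}\mathbf{V}$ in a compact set. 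The objective is continuous because $R\ge 0$ and $P_{\rm tot}\ge P_{\rm c}>0$. We also need the channel to be continuous in $\mathbf{q}$, which requires $d_k(\mathbf{q})>0$; this holds since $z_{\min}>0$. Hence a maximizer $\mathbf{x}^\star$ exists.

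Two observations then follow. For every feasible $\mathbf{x}$, $BR(\mathbf{x})-\eta^\star P_{\rm tot}(\mathbf{x})\le 0$, because $P_{\rm tot}>0$ and the ratio is at most $\eta^\star$. At $\mathbf{x}^\star$ this quantity equals $0$. Therefore $\mathcal{F}(\eta^\star)=0$. Conversely, suppose $\mathcal{F}(\eta)=0$ with maximizer $\bar{\mathbf{x}}$. Then $BR(\mathbf{x})/P_{\rm tot}(\mathbf{x})\le \eta$ for all $\mathbf{x}$, with equality at $\bar{\mathbf{x}}$, so $\eta=\eta^\star$.

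\textbf{Step 2: properties of $\mathcal{F}$.} $\mathcal{F}$ is a pointwise maximum of affine functions of $\eta$, so it is convex and continuous. It is strictly decreasing, since increasing $\eta$ by $\delta$ lowers every term by at least $\delta P_{\rm c}$. It satisfies $\mathcal{F}(0)\ge 0$, and $\mathcal{F}(\eta)\to-\infty$ as $\eta\to\infty$. Together these give existence and uniqueness of the root, which is consistent with Step 1.

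\textbf{Step 3: the update \eqref{eq:dinkelbach_update}.} Let $\mathbf{x}^{(t)}$ be the maximizer of \eqref{eq:f_eta} at $\eta^{(t)}$. Then
\[
\mathcal{F}(\eta^{(t)})=BR(\mathbf{x}^{(t)})-\eta^{(t)}P_{\rm tot}(\mathbf{x}^{(t)})\ge 0,
\]
where the inequality holds because $\eta^{(t)}$ is an attained EE value, say $\eta^{(t)}=BR(\mathbf{x}^{(t-1)})/P_{\rm tot}(\mathbf{x}^{(t-1)})$. Dividing by $P_{\rm tot}(\mathbf{x}^{(t)})$ gives $\eta^{(t+1)}-\eta^{(t)}=\mathcal{F}(\eta^{(t)})/P_{\rm tot}(\mathbf{x}^{(t)})\ge 0$. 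The iterates are also bounded above by $\eta^\star$. Monotone bounded sequences converge, so $\eta^{(t)}$ converges to some $\bar\eta$.

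To identify the limit, note that $P_{\rm tot}(\mathbf{x}^{(t)})\le \xi P_{\max}+P_{\rm c}$. Hence $\mathcal{F}(\eta^{(t)})\le(\xi P_{\max}+P_{\rm c})(\eta^{(t+1)}-\eta^{(t)})\to 0$. By continuity of $\mathcal{F}$, $\mathcal{F}(\bar\eta)=0$, and Step 1 then gives $\bar\eta=\eta^\star$. If desired, the superlinear rate follows from the subgradient inequality for the convex function $\mathcal{F}$, using $-P_{\rm tot}(\mathbf{x}^{(t)})$ as a subgradient at $\eta^{(t)}$.

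\textbf{Main obstacle.} The delicate point is the compactness and attainment argument in Step 1, because $\mathbf{V}$ is not directly bounded; the reduction to $\boldsymbol{\Psi}\mathbf{V}$ handles it. A second caveat is that the theorem presumes \eqref{eq:f_eta} is solved globally. In the actual algorithm it is solved only approximately by alternating optimization, so what can genuinely be claimed in practice is monotonicity of $\eta^{(t)}$, not global optimality.
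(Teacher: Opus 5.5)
Your proof is correct and follows the paper's route. The paper only invokes Dinkelbach's classical result (continuity and strict decrease of $\mathcal{F}$, uniqueness of its root, convergence of the ratio update). You have written that argument out in full, including the identity $\eta^{(t+1)}-\eta^{(t)}=\mathcal{F}(\eta^{(t)})/P_{\rm tot}(\mathbf{x}^{(t)})\ge 0$ and the identification of the limit via $P_{\rm tot}\le\xi P_{\max}+P_{\rm c}$.

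One soft spot is the compactness step. The reduction to $\mathbf{X}=\boldsymbol{\Psi}\mathbf{V}$ gives a compact joint feasible set only if $\boldsymbol{\Psi}$ has constant rank over all phase configurations, for example when $\mathbf{W}_1$ has full column rank and every $\mathbf{W}_\ell$, $\ell\ge 2$, is nonsingular. Even without attainment, $\mathcal{F}(\eta^\star)=0$ still holds if $\mathcal{F}$ is read as a supremum, by the same bound on $P_{\rm tot}$.
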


\begin{proof}
The proof follows directly from fractional programming theory. The function $\mathcal{F}(\eta)$ is continuous and strictly decreasing in $\eta$, and the unique root satisfies $\mathcal{F}(\eta)=0$. The iterative update converges to $\eta$ \cite{dinkelbach1967nonlinear}.
\end{proof}

For a fixed $\eta$, we consider the subtractive problem
\begin{subequations}
    \begin{equation}\label{eq:subtractive_problem}
\max_{\mathbf{V},\,\{\mathbf{\Phi}_\ell\},\,\mathbf{q}}
\;
B R(\mathbf{V},\{\mathbf{\Phi}_\ell\},\mathbf{q})
-\eta P_{\rm tot}(\mathbf{V},\{\mathbf{\Phi}_\ell\}),
\end{equation}
\begin{equation}
    \text{s.t.}\quad  \eqref{eq:ee_problem_power}-\eqref{eq:ee_problem_q}.
\end{equation}
\end{subequations}
Define
$
z_{k,i}\triangleq \mathbf{h}_k^H(\mathbf{q})\boldsymbol{\Psi}\mathbf{v}_i,
$
so that
\begin{equation}
\gamma_k=
\frac{|z_{k,k}|^2}{\sum_{i\neq k}|z_{k,i}|^2+\sigma_k^2}.
\end{equation}
The objective becomes
\begin{equation}\label{eq:subtractive_rate}
G(\mathbf{V},\{\mathbf{\Phi}_\ell\},\mathbf{q};\eta)
=
\sum_{k=1}^{K} B\log_2(1+\gamma_k)-\eta P_{\rm tot}.
\end{equation}

The non-convexity of $\log(1+\gamma_k)$ motivates the use of auxiliary-variable transforms \cite{shen2018fractional,shi2011iteratively}.

 \begin{lemma}
For any $\gamma_k \ge 0$, the following holds:
\begin{equation}
\log_2(1+\gamma_k)
=
\max_{\alpha_k \ge 0}
\left[
\log_2(1+\alpha_k)
-\frac{\alpha_k}{\ln 2}
+\frac{1+\alpha_k}{\ln 2}\,\zeta_k
\right],
\end{equation}
where
\begin{equation}
\zeta_k \triangleq \frac{\gamma_k}{1+\gamma_k}=
\frac{|z_{k,k}|^2}{\sum_{i=1}^{K}|z_{k,i}|^2+\sigma_k^2},
\end{equation}
and the optimum is attained at $\alpha_k = \gamma_k$.
\end{lemma}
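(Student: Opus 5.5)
The plan is to treat the right-hand side as a one-dimensional concave maximization in $\alpha_k$, with $\zeta_k$ held fixed, and to show that its unique maximizer is $\alpha_k=\gamma_k$, where the value equals $\log_2(1+\gamma_k)$. First, I would confirm the identity for $\zeta_k$ by direct algebra. Since $\gamma_k = |z_{k,k}|^2/(\sum_{i\neq k}|z_{k,i}|^2+\sigma_k^2)$, we have
\begin{equation*}
\frac{\gamma_k}{1+\gamma_k}=\frac{|z_{k,k}|^2}{\sum_{i\neq k}|z_{k,i}|^2+\sigma_k^2+|z_{k,k}|^2}=\frac{|z_{k,k}|^2}{\sum_{i=1}^{K}|z_{k,i}|^2+\sigma_k^2}.
\end{equation*}
This also gives $\zeta_k\in[0,1)$ whenever $\gamma_k\ge 0$.

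Next, define $f(\alpha_k)=\log_2(1+\alpha_k)-\frac{\alpha_k}{\ln 2}+\frac{1+\alpha_k}{\ln 2}\zeta_k$ on $\alpha_k\ge 0$. Its first two derivatives are
\begin{equation*}
f'(\alpha_k)=\frac{1}{\ln 2}\Big(\frac{1}{1+\alpha_k}-1+\zeta_k\Big),\qquad f''(\alpha_k)=-\frac{1}{\ln 2\,(1+\alpha_k)^2}<0,
\end{equation*}
so $f$ is strictly concave on $[0,\infty)$ and any stationary point is the unique global maximizer. Setting $f'(\alpha_k)=0$ gives $1+\alpha_k = 1/(1-\zeta_k)=1+\gamma_k$, that is, $\alpha_k^\star=\gamma_k$. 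This point satisfies $\alpha_k^\star\ge 0$, so the nonnegativity constraint is inactive. In the boundary case $\gamma_k=0$ the stationary point is $\alpha_k=0$, and there $f'(0)=\zeta_k/\ln 2=0$, which is consistent.

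Finally, I would substitute $\alpha_k=\gamma_k$ back into $f$:
\begin{equation*}
f(\gamma_k)=\log_2(1+\gamma_k)-\frac{\gamma_k}{\ln 2}+\frac{(1+\gamma_k)}{\ln 2}\cdot\frac{\gamma_k}{1+\gamma_k}=\log_2(1+\gamma_k),
\end{equation*}
because the last two terms cancel. This establishes both the equality and the location of the optimum. The only step that needs care is checking that the maximizer lies in the feasible region $\alpha_k\ge 0$, which holds because $\zeta_k<1$. Beyond that the argument is the standard Lagrangian dual transform of \cite{shen2018fractional} and presents no real obstacle.
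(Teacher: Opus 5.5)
Your proof is correct, but it uses a different mechanism from the paper's. The paper's proof is only a sketch that appeals to the convex-conjugate representation of the logarithm. That amounts to viewing $\ln(1+\gamma_k)=-\ln(1-\zeta_k)$ as a convex function of $\zeta_k\in[0,1)$, whose conjugate in the variable $s=1+\alpha_k$ is $s-1-\ln s$. Fenchel--Moreau then writes it as the supremum of its affine minorants $(1+\alpha_k)\zeta_k-\alpha_k+\ln(1+\alpha_k)$, with equality (Fenchel--Young) at $1+\alpha_k=1/(1-\zeta_k)=1+\gamma_k$; dividing by $\ln 2$ gives the lemma.

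You instead hold $\zeta_k$ fixed and do elementary calculus in $\alpha_k$: strict concavity, the first-order condition, and back-substitution. So you use concavity in the auxiliary variable rather than convexity in $\zeta_k$, and both routes lead to the same stationarity equation. Your route is self-contained and makes explicit what the paper leaves implicit:
\begin{itemize}
\item the identity $\gamma_k/(1+\gamma_k)=|z_{k,k}|^2/(\sum_{i}|z_{k,i}|^2+\sigma_k^2)$;
\item feasibility of the maximizer and the edge case $\gamma_k=0$;
\item the bound $f(\alpha_k)\le\log_2(1+\gamma_k)$ for every $\alpha_k\ge 0$, which is what the equivalence $G=\max J$ and the later monotonicity argument rely on.
\end{itemize}
The conjugate view adds insight rather than rigor. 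It explains where the surrogate comes from and why it is affine in $\zeta_k$ for fixed $\alpha_k$, which is what lets the quadratic transform be applied to $\zeta_k$ next.

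One minor slip in your closing remark: $\alpha_k^\star=\zeta_k/(1-\zeta_k)\ge 0$ follows from $\zeta_k\ge 0$. The condition $\zeta_k<1$ is what guarantees the stationary point is finite.
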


\begin{proof}
The result follows from the convex conjugate representation of the logarithmic function. By introducing the auxiliary variable $\alpha_k$, the objective becomes a tight lower bound of $\log_2(1+\gamma_k)$, which is maximized at $\alpha_k = \gamma_k$ \cite{shen2018fractional}.
\end{proof}

\begin{lemma}
The parameter $\zeta_k$ admits the following equivalent reformulation:
\begin{equation}
\zeta_k
=
\max_{y_k \in \mathbb{C}}
\;
2\Re\{y_k^* z_{k,k}\}
-
|y_k|^2\!\left(\sum_{i=1}^{K}|z_{k,i}|^2+\sigma_k^2\right),
\end{equation}
with the optimal solution
\begin{equation}
y_k=
\frac{z_{k,k}}{\sum_{i=1}^{K}|z_{k,i}|^2+\sigma_k^2}.
\end{equation}
\end{lemma}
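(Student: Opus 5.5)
The identity is the quadratic transform of Shen and Yu \cite{shen2018fractional}: a ratio $|a|^2/b$ with $b>0$ is rewritten as the maximum of a concave quadratic in an auxiliary complex variable. To prove it, we fix $\{z_{k,i}\}$ and $\sigma_k^2>0$ and write
\[
D_k \triangleq \sum_{i=1}^{K}|z_{k,i}|^2+\sigma_k^2 .
\]
Since $\sigma_k^2>0$, we have $D_k>0$ strictly, so no degeneracy arises. The right-hand side then defines a function of the single variable $y_k\in\mathbb{C}$,
\[
f(y_k)=2\Re\{y_k^* z_{k,k}\}-|y_k|^2 D_k .
\]

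The plan is to show that $f$ is strictly concave with a unique maximizer, compute that maximizer, and substitute it back.

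\emph{Concavity.} We view $y_k$ as a vector in $\mathbb{R}^2$ through its real and imaginary parts. The term $2\Re\{y_k^* z_{k,k}\}$ is linear, and $-D_k|y_k|^2$ is strictly concave because $D_k>0$. Hence $f$ is strictly concave and coercive ($f\to-\infty$ as $|y_k|\to\infty$), so a unique global maximizer exists.

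\emph{Stationary point.} Setting the Wirtinger derivative with respect to $y_k^*$ to zero gives $z_{k,k}-D_k y_k=0$, so $y_k^\star=z_{k,k}/D_k$, which is exactly the claimed optimal solution.

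\emph{Optimal value.} Substituting $y_k^\star$ gives
\[
f(y_k^\star)=\frac{2|z_{k,k}|^2}{D_k}-\frac{|z_{k,k}|^2}{D_k}=\frac{|z_{k,k}|^2}{D_k}=\zeta_k,
\]
which matches the definition of $\zeta_k$ in the preceding lemma.

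\emph{Avoiding calculus.} Alternatively, completing the square gives the identity
\[
f(y_k)=\frac{|z_{k,k}|^2}{D_k}-D_k\Bigl|y_k-\frac{z_{k,k}}{D_k}\Bigr|^2 .
\]
This shows at once that $f(y_k)\le\zeta_k$ for every $y_k$, with equality if and only if $y_k=y_k^\star$.

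The main obstacle is only the bookkeeping. The denominator of $\zeta_k$ must include the $i=k$ term, because $\zeta_k=\gamma_k/(1+\gamma_k)$ moves the desired-signal power into the denominator. With that in place, the argument is a direct two-line verification.
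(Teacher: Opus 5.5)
Your proposal is correct and follows essentially the same route as the paper: concavity of the quadratic in $y_k$, the stationarity condition obtained by differentiating with respect to $y_k^*$, and back-substitution to recover $\zeta_k$. Your completing-the-square identity, together with the observation that $\sigma_k^2>0$ keeps the denominator strictly positive, usefully makes global optimality and uniqueness of the maximizer explicit; the paper leaves these implicit.
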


\begin{proof}
The right-hand side is a concave quadratic function of $y_k$. Taking the derivative with respect to $y_k^*$ and setting it to zero yields the stated optimal solution. Substituting this solution back into the objective recovers $\zeta_k$, proving equivalence \cite{shen2018fractional}.
\end{proof}

Applying the above transformations yields the following equivalent reformulation.

\begin{theorem}
For any fixed $\eta$, Problem~\eqref{eq:subtractive_problem} is equivalent to
\begin{subequations}
\begin{equation}\label{eq:fp_problem}
\max_{\substack{\mathbf{V},\,\{\mathbf{\Phi}_\ell\},\,\mathbf{q}\\ \{\alpha_k \ge 0,\, y_k\}}}
\;
J(\mathbf{V},\{\mathbf{\Phi}_\ell\},\mathbf{q},\{\alpha_k,y_k\};\eta),
\end{equation}
\begin{equation}
\text{s.t.}\quad \eqref{eq:ee_problem_power}-\eqref{eq:ee_problem_q},
\end{equation}
\end{subequations}
where
\begin{equation}
J=
\sum_{k=1}^{K}
\frac{B}{\ln 2}
\left[
\ln(1+\alpha_k)-\alpha_k
+(1+\alpha_k)\zeta_k
\right]
-\eta P_{\rm tot},
\end{equation}
and
\begin{equation}
\zeta_k
=
2\Re\{y_k^* z_{k,k}\}
-
|y_k|^2\!\left(\sum_{i=1}^{K}|z_{k,i}|^2+\sigma_k^2\right).
\end{equation}
\end{theorem}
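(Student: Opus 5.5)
The plan is to show that maximizing $J$ jointly over the auxiliary variables and the original variables gives the same optimal value as Problem~\eqref{eq:subtractive_problem}, and that optimal $(\mathbf{V},\{\mathbf{\Phi}_\ell\},\mathbf{q})$ correspond. Since $\{\alpha_k,y_k\}$ appear neither in the constraints \eqref{eq:ee_problem_power}--\eqref{eq:ee_problem_q} nor in $P_{\rm tot}$, I will fix any feasible $(\mathbf{V},\{\mathbf{\Phi}_\ell\},\mathbf{q})$ and first maximize $J$ over the auxiliary variables alone. A max over a product set can be taken in either order, so the two problems share the same optimum once this inner maximum equals $G(\mathbf{V},\{\mathbf{\Phi}_\ell\},\mathbf{q};\eta)$ in \eqref{eq:subtractive_rate}.

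The inner maximization decouples across users and proceeds in two stages.

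\emph{Stage 1 (over $y_k$).} For fixed $\alpha_k\ge 0$, the coefficient $(1+\alpha_k)>0$ multiplies $\zeta_k(y_k)$, and $\zeta_k(y_k)$ is the only place $y_k$ appears. Hence maximizing over $y_k\in\mathbb{C}$ amounts to maximizing $\zeta_k(y_k)$. By the second lemma (the quadratic transform), this maximum equals
\[
\frac{|z_{k,k}|^2}{\sum_{i=1}^{K}|z_{k,i}|^2+\sigma_k^2}=\frac{\gamma_k}{1+\gamma_k}.
\]
It is attained at the stated $y_k$, independently of $\alpha_k$.

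\emph{Stage 2 (over $\alpha_k$).} Substituting this value, the $k$-th bracket becomes $\frac{B}{\ln 2}$ times the expression in the first lemma, up to the factor conversion $\log_2(1+\alpha_k)=\ln(1+\alpha_k)/\ln 2$. The first lemma then gives a maximum of $B\log_2(1+\gamma_k)$ at $\alpha_k=\gamma_k$. I will double-check this directly: the derivative of the bracket is $\frac{1}{1+\alpha_k}-1+\frac{\gamma_k}{1+\gamma_k}$, which vanishes only at $\alpha_k=\gamma_k$. Concavity in $\alpha_k$ makes this stationary point the global maximizer, and it is feasible since $\gamma_k\ge0$.

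Summing over $k$ and subtracting the unchanged $\eta P_{\rm tot}$ yields $\max_{\{\alpha_k,y_k\}}J=G$, which establishes equal optimal values. For the correspondence of solutions, two directions are needed. If $(\mathbf{V}^\star,\{\mathbf{\Phi}_\ell^\star\},\mathbf{q}^\star)$ solves \eqref{eq:subtractive_problem}, then appending the closed-form $\alpha_k^\star,y_k^\star$ solves \eqref{eq:fp_problem}. Conversely, the original-variable part of any maximizer of \eqref{eq:fp_problem} solves \eqref{eq:subtractive_problem}, because $J\le G$ holds pointwise.

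The main obstacle is the order of the two stages. The first lemma is stated with $\zeta_k$ as the exact ratio, whereas in $J$ the quantity $\zeta_k$ is the quadratic surrogate. Maximizing over $y_k$ first is justified only by the positivity of $1+\alpha_k$, so this step must be made explicit. The remaining steps are routine.
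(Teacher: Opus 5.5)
Your proposal is correct and takes the same route as the paper: apply the quadratic transform (second lemma) and the dual transform (first lemma), whose tightness gives $\max_{\{\alpha_k \ge 0,\, y_k\}} J = G$ for every feasible $(\mathbf{V},\{\mathbf{\Phi}_\ell\},\mathbf{q})$, and hence the equivalence. The paper states this in a single line, while you also make explicit the order of the inner maximizations (justified by $1+\alpha_k>0$) and the correspondence between maximizers, both of which the paper leaves implicit.
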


\begin{proof}
The proof follows by applying the dual transform to each logarithmic term and the quadratic transform to each SINR-related term $\zeta_k$. Since both transformations are tight, the reformulated problem is equivalent to the original one for fixed $\eta$ \cite{shen2018fractional}.
\end{proof}

Furthermore, for any feasible $(\mathbf{V},\{\mathbf{\Phi}_\ell\},\mathbf{q})$, we have
\begin{equation}
G(\mathbf{V},\{\mathbf{\Phi}_\ell\},\mathbf{q};\eta)
=
\max_{\{\alpha_k \ge 0,\, y_k\}}
J(\mathbf{V},\{\mathbf{\Phi}_\ell\},\mathbf{q},\{\alpha_k,y_k\};\eta).
\end{equation}
The optimal auxiliary variables are given by
\begin{equation}\label{eq:aux_alpha}
\alpha_k = \gamma_k,\qquad
y_k=
\frac{z_{k,k}}{\sum_{i=1}^{K}|z_{k,i}|^2+\sigma_k^2}, \ \forall k.
\end{equation}
This reformulation preserves optimality while yielding a structure that is amenable to efficient alternating optimization.

  \subsection{Digital Beamformer Update}

For fixed SIM phase shifts $\{\mathbf{\Phi}_\ell\}$, UAV position $\mathbf{q}$, and auxiliary variables $\{\alpha_k\}$ and $\{y_k\}$, the objective function in \eqref{eq:fp_problem} becomes a concave quadratic function with respect to the digital beamforming matrix $\mathbf{V}$. 

To facilitate the derivation, we define the effective channel after SIM processing as
\begin{equation}
\bar{\mathbf{h}}_k^H=\mathbf{h}_k^H(\mathbf{q})\boldsymbol{\Psi},\qquad
\bar{\mathbf{H}}=[\bar{\mathbf{h}}_1,\ldots,\bar{\mathbf{h}}_K]^H\in\mathbb{C}^{K\times N}.
\end{equation}
Moreover, introduce the diagonal matrices
$
\mathbf{A}=\mathrm{diag}(a_1,\ldots,a_K),\;
\mathbf{Y}=\mathrm{diag}(b_1,\ldots,b_K),
$
where
$
a_k=(1+\alpha_k)|y_k|^2,
b_k=(1+\alpha_k)y_k.
$
These definitions compactly capture the contributions of the auxiliary variables and user weights. By discarding constant terms independent of $\mathbf{V}$, the digital beamforming subproblem can be written as
\begin{subequations}
    \begin{align}
\max_{\mathbf{V}}\quad
&
-\frac{B}{\ln2}\mathrm{Tr}\!\big(\mathbf{V}^H\bar{\mathbf{H}}^H\mathbf{A}\bar{\mathbf{H}}\mathbf{V}\big)
+\frac{2B}{\ln2}\Re\!\big\{\mathrm{Tr}(\mathbf{Y}^H\bar{\mathbf{H}}\mathbf{V})\big\}
\nonumber\\
&
-\eta\xi\,\mathrm{Tr}\!\big(\boldsymbol{\Psi}\mathbf{V}\mathbf{V}^H\boldsymbol{\Psi}^H\big)
\label{eq:fp_v_subproblem}
\end{align}
\begin{equation}
    \text{s.t.} \quad \eqref{eq:ee_problem_power}
\end{equation}
\end{subequations}
The above problem is a convex quadratic program since the objective is concave in $\mathbf{V}$ and the constraint is convex.

\begin{theorem}
For fixed SIM response $\{\mathbf{\Phi}_\ell\}$, UAV position $\mathbf{q}$, and auxiliary variables $\{\alpha_k\}$, and $\{y_k\}$, the optimal solution to \eqref{eq:fp_v_subproblem} is given by
\begin{equation}\label{eq:fp_v_opt}
\mathbf{V}(\nu)
=
\left(
\frac{B}{\ln2}\bar{\mathbf{H}}^H\mathbf{A}\bar{\mathbf{H}}
+
(\eta\xi+\nu)\boldsymbol{\Psi}^H\boldsymbol{\Psi}
\right)^{\dagger}
\frac{B}{\ln2}\bar{\mathbf{H}}^H\mathbf{Y},
\end{equation}
where $\nu\ge 0$ is the Lagrange multiplier associated with the power constraint \eqref{eq:ee_problem_power}.
\end{theorem}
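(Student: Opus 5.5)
Because \eqref{eq:fp_v_subproblem} with \eqref{eq:ee_problem_power} is a convex program in $\mathbf{V}$, I will write its Lagrangian and read the solution off the KKT conditions. With multiplier $\nu\ge 0$ for the power constraint, the Lagrangian is
\begin{equation*}
\mathcal{L}(\mathbf{V},\nu)=-\frac{B}{\ln2}\mathrm{Tr}\big(\mathbf{V}^H\mathbf{Q}\mathbf{V}\big)+\frac{2B}{\ln2}\Re\big\{\mathrm{Tr}(\mathbf{Y}^H\bar{\mathbf{H}}\mathbf{V})\big\}-(\eta\xi+\nu)\mathrm{Tr}\big(\mathbf{V}^H\boldsymbol{\Psi}^H\boldsymbol{\Psi}\mathbf{V}\big)+\nu P_{\max},
\end{equation*}
where $\mathbf{Q}\triangleq\bar{\mathbf{H}}^H\mathbf{A}\bar{\mathbf{H}}$. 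Since $\mathbf{A}\succeq\mathbf{0}$ (its entries are $a_k=(1+\alpha_k)|y_k|^2\ge0$) and $\eta\xi+\nu\ge0$, we have $\mathbf{Q}\succeq\mathbf{0}$ and $\boldsymbol{\Psi}^H\boldsymbol{\Psi}\succeq\mathbf{0}$. Hence $\mathcal{L}$ is concave in $\mathbf{V}$ for every $\nu\ge0$. Slater's condition holds because $\mathbf{V}=\mathbf{0}$ is strictly feasible whenever $P_{\max}>0$, so strong duality applies and the KKT conditions are necessary and sufficient.

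Next I will take the Wirtinger derivative with respect to $\mathbf{V}^*$ using the standard identities $\partial\,\mathrm{Tr}(\mathbf{V}^H\mathbf{X}\mathbf{V})/\partial\mathbf{V}^*=\mathbf{X}\mathbf{V}$ and $\partial\,\Re\{\mathrm{Tr}(\mathbf{C}\mathbf{V})\}/\partial\mathbf{V}^*=\tfrac12\mathbf{C}^H$. Setting the derivative to zero gives the stationarity equation
\begin{equation*}
\Big(\frac{B}{\ln2}\mathbf{Q}+(\eta\xi+\nu)\boldsymbol{\Psi}^H\boldsymbol{\Psi}\Big)\mathbf{V}=\frac{B}{\ln2}\bar{\mathbf{H}}^H\mathbf{Y}.
\end{equation*}
The remaining KKT conditions are primal feasibility \eqref{eq:ee_problem_power}, $\nu\ge0$, and complementary slackness $\nu\big(\mathrm{Tr}(\boldsymbol{\Psi}\mathbf{V}\mathbf{V}^H\boldsymbol{\Psi}^H)-P_{\max}\big)=0$.

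The main obstacle is solving this linear system, since the coefficient matrix $\mathbf{M}(\nu)$ on the left may be singular. I will argue that the right-hand side always lies in the range of $\mathbf{M}(\nu)$. By definition $\mathrm{range}(\bar{\mathbf{H}}^H\mathbf{Y})\subseteq\mathrm{range}(\bar{\mathbf{H}}^H)$. For components $k$ with $y_k\ne0$, the $k$-th column of $\bar{\mathbf{H}}^H$ also lies in $\mathrm{range}(\bar{\mathbf{H}}^H\mathbf{A}^{1/2})$. Because $\mathbf{M}(\nu)$ is a sum of PSD matrices, its range contains $\mathrm{range}(\mathbf{Q})=\mathrm{range}(\bar{\mathbf{H}}^H\mathbf{A}^{1/2})$. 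The system is therefore consistent, and $\mathbf{V}(\nu)=\mathbf{M}(\nu)^\dagger\frac{B}{\ln2}\bar{\mathbf{H}}^H\mathbf{Y}$ is a solution; it is the minimum-norm one, and any other solution differs only by null-space directions that leave both the objective and the radiated power unchanged. This yields \eqref{eq:fp_v_opt}. In the generic case $\mathrm{rank}(\boldsymbol{\Psi})=N$ (recall $M\ge N$), $\boldsymbol{\Psi}^H\boldsymbol{\Psi}\succ\mathbf{0}$, so $\mathbf{M}(\nu)$ is invertible and the pseudo-inverse reduces to an ordinary inverse whenever $\eta\xi+\nu>0$.

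Finally I will fix $\nu$ by complementary slackness. If $\mathbf{V}(0)$ satisfies \eqref{eq:ee_problem_power}, take $\nu=0$. Otherwise choose $\nu>0$ so that $P(\nu)\triangleq\mathrm{Tr}(\boldsymbol{\Psi}\mathbf{V}(\nu)\mathbf{V}(\nu)^H\boldsymbol{\Psi}^H)=P_{\max}$. To justify this, I will diagonalize $\mathbf{M}(\nu)$ in the invertible case, or argue via the Lagrangian, to show that $P(\nu)$ is continuous and nonincreasing in $\nu$ and tends to $0$ as $\nu\to\infty$. This guarantees a root, which the bisection search mentioned earlier finds.
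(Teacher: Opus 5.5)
Your proposal is correct and follows the same route as the paper. Both write the Lagrangian with multiplier $\nu\ge 0$, set the derivative with respect to $\mathbf{V}^*$ to zero to obtain the linear stationarity equation, solve it with the pseudoinverse, and fix $\nu$ by complementary slackness via bisection. Your extra steps (concavity and Slater's condition, the range argument showing $\frac{B}{\ln2}\bar{\mathbf{H}}^H\mathbf{Y}\in\mathrm{range}(\mathbf{M}(\nu))$, and the monotonicity of $P(\nu)$) make rigorous what the paper only asserts, notably its remark that the pseudoinverse ``ensures existence'' of a solution. One caveat: your null-space remark and the continuity of $P(\nu)$ at $\nu=0$ need $\eta>0$ or a nonsingular $\mathbf{Q}$. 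For $\eta=\nu=0$, the null space of $\mathbf{M}(0)=\frac{B}{\ln2}\mathbf{Q}$ can contain directions with $\boldsymbol{\Psi}\mathbf{n}\neq\mathbf{0}$.
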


\begin{proof}
The problem in \eqref{eq:fp_v_subproblem} is a convex quadratic program. Introducing the Lagrangian with multiplier $\nu \ge 0$, we obtain
\begin{align}
\mathcal{L}(\mathbf{V},\nu)
&=
-\frac{B}{\ln2}\mathrm{Tr}\!\big(\mathbf{V}^H\bar{\mathbf{H}}^H\mathbf{A}\bar{\mathbf{H}}\mathbf{V}\big)
+\frac{2B}{\ln2}\Re\!\big\{\mathrm{Tr}(\mathbf{Y}^H\bar{\mathbf{H}}\mathbf{V})\big\}
\nonumber\\
&\quad
-(\eta\xi+\nu)\mathrm{Tr}\!\big(\boldsymbol{\Psi}\mathbf{V}\mathbf{V}^H\boldsymbol{\Psi}^H\big)
+\nu P_{\max}.
\end{align}
Taking the derivative with respect to $\mathbf{V}^H$ and setting it to zero yields the following KKT condition
\begin{equation}
\frac{B}{\ln2}\bar{\mathbf{H}}^H\mathbf{A}\bar{\mathbf{H}}\mathbf{V}
+
(\eta\xi+\nu)\boldsymbol{\Psi}^H\boldsymbol{\Psi}\mathbf{V}
=
\frac{B}{\ln2}\bar{\mathbf{H}}^H\mathbf{Y}.
\end{equation}
Solving this linear matrix equation yields \eqref{eq:fp_v_opt}. The pseudoinverse ensures existence of a solution even in rank-deficient cases.
\end{proof}
The dual variable $\nu$ is selected to satisfy the power constraint
\begin{equation}
\mathrm{Tr}\!\left(\boldsymbol{\Psi}\mathbf{V}(\nu)\mathbf{V}(\nu)^H\boldsymbol{\Psi}^H\right)\le P_{\max}.
\end{equation}
Since the transmit power is monotonically decreasing in $\nu$, the optimal value can be efficiently obtained via a one-dimensional bisection search. If the constraint is inactive, then $\nu=0$; otherwise, $\nu>0$ is adjusted until the constraint is met with equality.


 \subsection{SIM Optimization via Riemannian Manifold Ascent}
Assuming the other variables fixed, the optimization of the SIM phase shifts is subject to unit-modulus constraints, which naturally induce a Riemannian manifold structure \cite{absil2009optimization}. Specifically, for each layer $\ell$, the feasible set is given by
\begin{equation}
\mathcal{M}_{\ell}
=
\{\boldsymbol{\phi}_{\ell}\in\mathbb{C}^{M}:\ |\phi_{\ell,m}|=1,\ \forall m\},
\end{equation}
which corresponds to a product manifold of $M$ complex unit circles. Due to the non-convexity of this constraint set, we adopt Riemannian optimization techniques \cite{absil2009optimization} to efficiently update the SIM coefficients.

For fixed $\mathbf{V}$, $\mathbf{q}$, $\{\alpha_k\}$, $\{y_k\}$, and $\{\mathbf{\Phi}_j\}_{j\neq \ell}$, the objective function with respect to $\boldsymbol{\phi}_{\ell}$ for the $l$-th layers is given by
\begin{equation} \label{eq:sim_obj_expanded}
    \begin{aligned}
        f_{\ell}(\boldsymbol{\phi}_{\ell})
&=
\frac{B}{\ln2}\sum_{k=1}^{K}(1+\alpha_k)
\Big[
2\Re\{y_k^* z_{k,k}(\boldsymbol{\phi}_{\ell})\}
\\&\quad-
|y_k|^2\sum_{i=1}^{K}|z_{k,i}(\boldsymbol{\phi}_{\ell})|^2
\Big]
-\eta\xi\,P_{{\rm tx},\ell}(\boldsymbol{\phi}_{\ell}),
    \end{aligned}
\end{equation}
where $P_{{\rm tx},\ell}$ denotes the transmit power contribution associated with layer $\ell$. The resulting SIM optimization subproblem is
\begin{equation}\label{eq:sim_manifold_subproblem}
\max_{\boldsymbol{\phi}_{\ell}\in\mathcal{M}_{\ell}} \; f_{\ell}(\boldsymbol{\phi}_{\ell}), \quad \text{s.t.} \quad \eqref{eq:ee_problem_phase}
\end{equation}
To facilitate efficient optimization, we exploit the cascaded structure of the SIM. Define the partial matrices
\begin{equation}
\mathbf{A}_{\ell}
=
\mathbf{\Phi}_L\mathbf{W}_L\cdots\mathbf{\Phi}_{\ell+1}\mathbf{W}_{\ell+1},\;
\mathbf{B}_{\ell}
=
\mathbf{W}_{\ell}\mathbf{\Phi}_{\ell-1}\mathbf{W}_{\ell-1}\cdots\mathbf{\Phi}_1\mathbf{W}_1\mathbf{V},
\end{equation}
so that $\boldsymbol{\Psi}\mathbf{V}=\mathbf{A}_{\ell}\mathbf{\Phi}_{\ell}\mathbf{B}_{\ell}$. Using this decomposition, the effective signal term can be expressed in linear form as
\begin{equation}
z_{k,i}=\boldsymbol{\phi}_{\ell}^{T}\mathbf{c}_{k,i}^{(\ell)},
\end{equation}
where $\mathbf{c}_{k,i}^{(\ell)}\in\mathbb{C}^{M}$ is given by
\begin{equation}
\mathbf{c}_{k,i}^{(\ell)}
=
\left(\mathbf{h}_k^H(\mathbf{q})\mathbf{A}_{\ell}\right)^T
\odot
\left(\mathbf{B}_{\ell}\mathbf{e}_i\right),
\end{equation}
where $\odot$ denotes the Hadamard product and $\mathbf{e}_i$ denotes the $i$-th canonical basis vector. Equivalently, its $m$-th element is
\begin{equation}
c_{k,i,m}^{(\ell)}
=
\left[\mathbf{h}_k^H(\mathbf{q})\mathbf{A}_{\ell}\right]_m
\left[\mathbf{B}_{\ell}\mathbf{e}_i\right]_m .
\end{equation} This follows directly by expanding the diagonal structure of $\mathbf{\Phi}_{\ell}$ and collecting the coefficients associated with each meta-atom.
Accordingly, the quadratic terms admit the representation
\begin{equation}
|z_{k,i}|^2=\boldsymbol{\phi}_{\ell}^H \mathbf{C}_{k,i}^{(\ell)} \boldsymbol{\phi}_{\ell},
\quad
\mathbf{C}_{k,i}^{(\ell)}=\mathbf{c}_{k,i}^{(\ell)*}\mathbf{c}_{k,i}^{(\ell)T},
\end{equation}
and the transmit power can be written as
\begin{equation}
P_{{\rm tx},\ell}(\boldsymbol{\phi}_{\ell})
=
\boldsymbol{\phi}_{\ell}^H \mathbf{Q}_{\ell} \boldsymbol{\phi}_{\ell}, \quad \mathbf{Q}_{\ell}\succeq 0.
\end{equation}
The matrix $\mathbf{Q}_{\ell}$ is obtained by expanding
$
P_{{\rm tx},\ell}
=
\|\mathbf{A}_{\ell}\mathbf{\Phi}_{\ell}\mathbf{B}_{\ell}\|_F^2
$
with respect to $\boldsymbol{\phi}_{\ell}$. Specifically, let
$
\mathbf{G}_{\ell}=\mathbf{A}_{\ell}^H\mathbf{A}_{\ell}
$
and
$
\mathbf{S}_{\ell}=\mathbf{B}_{\ell}\mathbf{B}_{\ell}^H.
$
Then,
\begin{equation}
\mathbf{Q}_{\ell}
=
\mathbf{G}_{\ell}\odot \mathbf{S}_{\ell}^{T}
=
\left(\mathbf{A}_{\ell}^H\mathbf{A}_{\ell}\right)
\odot
\left(\mathbf{B}_{\ell}\mathbf{B}_{\ell}^H\right)^T .
\end{equation}
Since $\mathbf{G}_{\ell}\succeq \mathbf{0}$ and $\mathbf{S}_{\ell}^{T}\succeq \mathbf{0}$, it follows from the Schur product theorem that $\mathbf{Q}_{\ell}\succeq \mathbf{0}$ \cite{paulsen1989schur}.
Substituting these expressions into \eqref{eq:sim_obj_expanded} reveals that $f_\ell(\boldsymbol{\phi}_{\ell})$ is a quadratic function of $\boldsymbol{\phi}_{\ell}$.

Using Wirtinger calculus, the Euclidean gradient of $f_\ell$ with respect to $\boldsymbol{\phi}_{\ell}^H$ is obtained as
\begin{equation}\label{eq:fp_euclid_grad_sim}
\nabla_{\boldsymbol{\phi}_{\ell}^{H}}f_{\ell}
=
\frac{B}{\ln2}\sum_{k=1}^{K}(1+\alpha_k)
\left(
y_k^*\,\mathbf{c}_{k,k}^{(\ell)*}
-
|y_k|^2 \sum_{i=1}^{K}\mathbf{C}_{k,i}^{(\ell)}\boldsymbol{\phi}_{\ell}
\right)
-\eta\xi\,\mathbf{Q}_{\ell}\boldsymbol{\phi}_{\ell},
\end{equation}
which follows from standard differentiation of quadratic forms. However, since $\boldsymbol{\phi}_{\ell}$ lies on the manifold $\mathcal{M}_{\ell}$, the update direction must belong to the corresponding tangent space \cite{zhong2024ris}
\begin{equation}
T_{\boldsymbol{\phi}_{\ell}}\mathcal{M}_{\ell}
=
\{\mathbf{z}\in\mathbb{C}^{M}:\ \Re\{\mathbf{z}\odot\boldsymbol{\phi}_{\ell}^{*}\}=0\}.
\end{equation}
Therefore, the Riemannian gradient is obtained by projecting the Euclidean gradient onto the tangent space as \cite{absil2009optimization}
\begin{equation}\label{eq:fp_riem_grad}
\mathrm{grad}\,f_{\ell}
=
\nabla_{\boldsymbol{\phi}_{\ell}^{*}}f_{\ell}
-
\Re\{\nabla_{\boldsymbol{\phi}_{\ell}^{*}}f_{\ell}\odot\boldsymbol{\phi}_{\ell}^{*}\}\odot\boldsymbol{\phi}_{\ell}.
\end{equation}
Starting from an initial feasible point $\boldsymbol{\phi}_{\ell}^{(r)}$, we perform a Riemannian gradient ascent step followed by a retraction onto the manifold to enforce the unit-modulus constraint. The retraction operator is given as
\begin{equation}
\mathcal{R}_{\boldsymbol{\phi}_{\ell}}(\mathbf{d})
=
e^{j\angle(\boldsymbol{\phi}_{\ell}+\mathbf{d})},
\end{equation}
which normalizes each element to unit magnitude. The update rule is then given by
\begin{equation}
\boldsymbol{\phi}_{\ell}^{(r+1)}
=
\mathcal{R}_{\boldsymbol{\phi}_{\ell}^{(r)}}\!\left(\alpha_r\,\mathrm{grad}\,f_{\ell}(\boldsymbol{\phi}_{\ell}^{(r)})\right),
\end{equation}
where $\alpha_r>0$ is selected using Armijo backtracking line search \cite{karthiga2026multistart} to ensure monotonic improvement of the objective. By sequentially updating all layers $\ell=1,\ldots,L$, one complete SIM optimization cycle is obtained. This procedure guarantees non-decreasing objective values at each iteration and efficiently exploits the manifold structure of the problem. Overall optimization procedure to optimize the multi-layer SIM response is formally given in Algorithm 1.

  \begin{algorithm}[t]
\caption{SIM Optimization via Riemannian Manifold Ascent}
\label{alg:sim_update}
\begin{algorithmic}[1]
\Require $\mathbf{V}$, $\mathbf{q}$, $\{\alpha_k\}$, $\{y_k\}$, initial $\{\boldsymbol{\phi}_\ell^{(0)}\}$, tolerance $\epsilon_{\rm sim}$, maximum iterations $R_{\max}$
\Ensure Updated $\{\boldsymbol{\phi}_\ell\}$
\State Initialize iteration index $r \leftarrow 0$
\Repeat
    \For{$\ell = 1$ to $L$}
        \State Construct $\mathbf{A}_{\ell}$ and $\mathbf{B}_{\ell}$
        \State Compute $\{\mathbf{c}_{k,i}^{(\ell)}, \mathbf{C}_{k,i}^{(\ell)}\}$ and $\mathbf{Q}_{\ell}$
        \State Evaluate Euclidean gradient $\nabla_{\boldsymbol{\phi}_\ell^H} f_\ell$ using \eqref{eq:fp_euclid_grad_sim}
        \State Compute Riemannian gradient $\mathrm{grad}\, f_\ell$ using \eqref{eq:fp_riem_grad}
        \State Determine step size $\alpha_r$ via Armijo backtracking line search
        \State Update
        \[
        \boldsymbol{\phi}_{\ell}^{(r+1)} =
        \mathcal{R}_{\boldsymbol{\phi}_{\ell}^{(r)}}\!\left(\alpha_r\, \mathrm{grad}\, f_\ell(\boldsymbol{\phi}_{\ell}^{(r)})\right)
        \]
    \EndFor
    \State $r \leftarrow r + 1$
\Until{$\max_{\ell}\|\boldsymbol{\phi}_{\ell}^{(r)} - \boldsymbol{\phi}_{\ell}^{(r-1)}\| \le \epsilon_{\rm sim}$}
\end{algorithmic}
\end{algorithm}

\subsection{UAV 3D Position Optimization}

For fixed digital beamformer $\mathbf{V}$, SIM phase matrices $\{\mathbf{\Phi}_\ell\}$, and auxiliary variables $\{\alpha_k\}$ and $\{y_k\}$, the UAV position only affects the transformed objective through the user channels. Hence, the UAV position update reduces to maximizing the function
\begin{equation}\label{eq:q_sca_surrogate}
f_q(\mathbf{q})
=
\sum_{k=1}^{K}
\frac{B}{\ln2}(1+\alpha_k)
\left[
2\Re\{y_k^* z_{k,k}(\mathbf{q})\}
-
|y_k|^2\sum_{i=1}^{K}|z_{k,i}(\mathbf{q})|^2
\right],
\end{equation}
under the constraint \eqref{eq:ee_problem_q}, where $z_{k,i}(\mathbf{q})=\mathbf{h}_k^H(\mathbf{q})\boldsymbol{\Psi}\mathbf{v}_i$ and we define $\mathbf{f}_i=\boldsymbol{\Psi}\mathbf{v}_i$ for notational simplicity. The circuit power and SIM-dependent transmit power are independent of $\mathbf{q}$ for fixed $\mathbf{V}$ and $\{\mathbf{\Phi}_\ell\}$, and thus do not affect this subproblem. The gradient of $f_q(\mathbf{q})$ can be obtained using the chain rule as
\begin{equation}\label{eq:fp_q_grad_updated}
\nabla_{\mathbf{q}}f_q
=
\frac{B}{\ln2}\sum_{k=1}^{K}(1+\alpha_k)
\left(
2\Re\{y_k^*\nabla_{\mathbf{q}}z_{k,k}\}
-
|y_k|^2\sum_{i=1}^{K}\nabla_{\mathbf{q}}|z_{k,i}|^2
\right),
\end{equation}
where
\[
\nabla_{\mathbf{q}}|z_{k,i}|^2
=
2\Re\{z_{k,i}^*\nabla_{\mathbf{q}}z_{k,i}\},
\quad
\nabla_{\mathbf{q}}z_{k,i}
=
\sum_{m=1}^{M}[\mathbf{f}_i]_m\nabla_{\mathbf{q}}h_{k,m}^*(\mathbf{q}).
\]

To explicitly characterize the position dependence, recall that the channel coefficient can be expressed as $h_{k,m}(\mathbf{q})=\sqrt{\beta_k(\mathbf{q})}\,\tilde h_{k,m}(\mathbf{q})$, where $\beta_k(\mathbf{q})=\beta_0 d_k(\mathbf{q})^{-\alpha}$ is the path loss and $d_k(\mathbf{q})=\|\mathbf{q}-\mathbf{u}_k\|_2$ is the distance. Applying the product rule yields
\begin{equation}
\nabla_{\mathbf{q}}h_{k,m}^*(\mathbf{q})
=
\frac{\partial \sqrt{\beta_k(\mathbf{q})}}{\partial \mathbf{q}}\,\tilde h_{k,m}^*(\mathbf{q})
+
\sqrt{\beta_k(\mathbf{q})}\,\nabla_{\mathbf{q}}\tilde h_{k,m}^*(\mathbf{q}).
\end{equation}

The first term captures the variation of large-scale fading and is given by
\begin{equation}
\frac{\partial \sqrt{\beta_k(\mathbf{q})}}{\partial \mathbf{q}}
=
-\frac{\alpha}{2d_k(\mathbf{q})}
\sqrt{\beta_k(\mathbf{q})}
\frac{\mathbf{q}-\mathbf{u}_k}{d_k(\mathbf{q})}.
\end{equation}
The second term accounts for angular variations of the array response. Since the scattered component is independent of $\mathbf{q}$, only the LoS component contributes, yielding
\begin{equation}
\nabla_{\mathbf{q}}\tilde h_{k,m}^*(\mathbf{q})
=
\sqrt{\frac{K_k}{K_k+1}}
\left(
\frac{\partial a_m^*}{\partial \vartheta_k}\nabla_{\mathbf{q}}\vartheta_k
+
\frac{\partial a_m^*}{\partial \varphi_k}\nabla_{\mathbf{q}}\varphi_k
\right).
\end{equation}

Let $\mathbf{r}_k=\mathbf{u}_k-\mathbf{q}$, $\rho_k=\sqrt{r_{k,x}^2+r_{k,y}^2}$, and $d_k=\|\mathbf{r}_k\|_2$. The elevation and azimuth angles are given by
\begin{equation}
   \vartheta_k=\arctan\!\left(\frac{-r_{k,z}}{\rho_k}\right),\qquad
\varphi_k=\operatorname{atan2}(r_{k,y},r_{k,x}),
\end{equation}

with gradients
\begin{equation}
    \nabla_{\mathbf{q}}\vartheta_k=
\begin{bmatrix}
-\dfrac{r_{k,x}r_{k,z}}{\rho_k d_k^2}\\
-\dfrac{r_{k,y}r_{k,z}}{\rho_k d_k^2}\\
\dfrac{\rho_k}{d_k^2}
\end{bmatrix},
\quad
\nabla_{\mathbf{q}}\varphi_k=
\begin{bmatrix}
\dfrac{r_{k,y}}{\rho_k^2}\\
-\dfrac{r_{k,x}}{\rho_k^2}\\
0
\end{bmatrix}.
\end{equation}

Although the gradient can be computed in closed form, the resulting problem remains non-convex due to the nonlinear dependence of the channel on $\mathbf{q}$. To address this issue, we adopt a SCA approach. Specifically, at iteration $t$, we construct a concave surrogate function around the current point $\mathbf{q}^{(t)}$ as
\begin{equation}
\widetilde f_q(\mathbf{q}\mid \mathbf{q}^{(t)})
=
f_q(\mathbf{q}^{(t)})
+
\nabla_{\mathbf{q}}f_q(\mathbf{q}^{(t)})^T(\mathbf{q}-\mathbf{q}^{(t)})
-\frac{\tau_t}{2}\|\mathbf{q}-\mathbf{q}^{(t)}\|_2^2,
\end{equation}
where $\tau_t>0$ is chosen to ensure that $\widetilde f_q$ is a global lower bound of $f_q$ and is tight at $\mathbf{q}^{(t)}$.

The UAV position is constrained within the feasible region
$\mathcal{Q}$ according to the constraint \eqref{eq:ee_problem_q}.
Given the gradient, maximizing the surrogate over $\mathcal{Q}$ yields the update
\begin{equation} \label{eq_spostamento}
\mathbf{q}^{(t+1)}
=
\Pi_{\mathcal{Q}}
\left(
\mathbf{q}^{(t)}
+
\frac{1}{\tau_t}\nabla_{\mathbf{q}}f_q(\mathbf{q}^{(t)})
\right),
\end{equation}
where $\Pi_{\mathcal{Q}}(\cdot)$ denotes Euclidean projection onto $\mathcal{Q}$. This update guarantees monotonic improvement of the objective and preserves feasibility, thereby enabling efficient and stable UAV position optimization.

 \begin{algorithm}[t]
\caption{Joint EE Maximization Framework for SIM-Assisted UAV Communications}
\label{alg:overall_framework}
\begin{algorithmic}[1]
\Require Initial feasible $\mathbf{V}^{(0)}$, $\{\mathbf{\Phi}_\ell^{(0)}\}$, $\mathbf{q}^{(0)}$, tolerances $\epsilon_{\rm out}$ and $\epsilon_{\rm in}$, maximum iterations $T_{\max}$
\Ensure Optimized $\mathbf{V}$, $\{\mathbf{\Phi}_\ell\}$, and $\mathbf{q}$

\State Set $t\leftarrow 0$
\State Initialize $\mathbf{V}\leftarrow\mathbf{V}^{(0)}$, $\{\mathbf{\Phi}_\ell\}\leftarrow\{\mathbf{\Phi}_\ell^{(0)}\}$, and $\mathbf{q}\leftarrow\mathbf{q}^{(0)}$
\State Compute $\eta^{(0)}$ using \eqref{eq:dinkelbach_update}

\Repeat
    \State Update $\{\alpha_k\}$ and $\{y_k\}$ using \eqref{eq:aux_alpha}
    \State Update $\mathbf{V}$ using \eqref{eq:fp_v_opt}, with $\nu$ found by bisection
    \State Update $\{\boldsymbol{\phi}_\ell\}_{\ell=1}^{L}$ using Algorithm~\ref{alg:sim_update}
    \State Set $\mathbf{\Phi}_\ell \leftarrow \mathrm{diag}(\boldsymbol{\phi}_\ell)$, $\forall \ell$
    \State Compute $\nabla_{\mathbf{q}} f_q$ using \eqref{eq:fp_q_grad_updated}
    \State Update $\mathbf{q}$ using \eqref{eq_spostamento}
    \State Update $\eta^{(t+1)}$ using \eqref{eq:dinkelbach_update}
    \State $t \leftarrow t+1$
\Until{$|\eta^{(t)}-\eta^{(t-1)}|\le \epsilon_{\rm out}$ or $t\ge T_{\max}$}

\end{algorithmic}
\end{algorithm}
 
\subsection{Convergence}

The convergence of the proposed algorithm follows from the monotonic behavior of each update block. For a fixed Dinkelbach parameter $\eta$, the transformed objective in \eqref{eq:fp_problem} is optimized in an alternating manner over the auxiliary variables, the digital beamformer, the SIM phase shifts, and the UAV position. The auxiliary variables $\{\alpha_k,y_k\}$ are updated in closed form according to \eqref{eq:aux_alpha}, which gives the exact maximizer of the transformed objective for fixed primal variables. The digital beamformer is then updated using \eqref{eq:fp_v_opt}, which is the global optimizer of the corresponding concave quadratic subproblem under the transmit-power constraint. Therefore, neither of these two steps decreases the transformed objective.

For the SIM phase update, each layer is optimized on the product manifold of complex unit-modulus variables. The Riemannian gradient in \eqref{eq:fp_riem_grad} provides a feasible ascent direction on the manifold, and the Armijo line search selects a step size that guarantees a non-decreasing value of the layer-wise objective. Since the layers are updated sequentially, one complete SIM update cycle also does not decrease the transformed objective. Similarly, the UAV position update is obtained by maximizing the surrogate function in \eqref{eq:q_sca_surrogate}, which is tight at the current point and selected as a lower bound of the original position-dependent objective. Hence, the projected update in \eqref{eq_spostamento} guarantees
\begin{equation}
f_q(\mathbf{q}^{(t+1)}) \ge f_q(\mathbf{q}^{(t)}).
\end{equation}
Combining the above arguments, the inner alternating optimization procedure satisfies
\begin{equation}
J^{(t+1)} \ge J^{(t)},
\end{equation}
where $J^{(t)}$ denotes the transformed objective value at iteration $t$ for a fixed $\eta$.

The generated sequence is also bounded from above. This is because the transmit power is limited by \eqref{eq:ee_problem_power}, the SIM phase shifts lie on compact unit-modulus manifolds, and the UAV position is restricted to the bounded feasible set defined by  \eqref{eq:ee_problem_q}. Therefore, the non-decreasing sequence of transformed objective values converges to a finite limit. After the inner alternating updates converge, the Dinkelbach parameter is updated according to \eqref{eq:dinkelbach_update}. Since the total consumed power is strictly positive over the feasible set, the Dinkelbach procedure drives the residual function $\mathcal{F}(\eta)$ in \eqref{eq:f_eta} toward zero. Consequently, the overall algorithm converges to a stationary solution of the original EE maximization problem. 

A typical convergence behaviour of the proposed design with SIM layers $L=2$, size of each layer $M=36,25$ and with the number of users $K=4,2$ is presented in Fig.~\ref{fig:placeholder}.

\subsection{Complexity Analysis}

We analyze the computational complexity of one complete iteration of the proposed joint optimization algorithm. First, the auxiliary-variable update requires computing the effective terms $z_{k,i}$ for all user--stream pairs, which involves the product $\mathbf{H}^H\boldsymbol{\Psi}\mathbf{V}$ and has complexity $\mathcal{O}(K^2M)$. Once these terms are available, updating $\{\alpha_k,y_k\}$ only requires scalar operations with complexity $\mathcal{O}(K^2)$. Second, the digital beamformer update requires forming the matrices $\bar{\mathbf{H}}^H\mathbf{A}\bar{\mathbf{H}}$ and $\boldsymbol{\Psi}^H\boldsymbol{\Psi}$, followed by solving an $N\times N$ linear system. This step has complexity $\mathcal{O}(K^2N+MN^2+I_{\nu}N^3)$, where $I_{\nu}$ is the number of bisection iterations used to find the dual variable $\nu$. In our case we assume always $N=K$, thus this reduces to $\mathcal{O}(K^3+MK^2+I_{\nu}K^3)$. Third, for the SIM update, each layer requires constructing the layer-dependent quantities and evaluating the Riemannian gradient. The dominant cost comes from the matrix operations involving $M$ SIM elements at each layer and all $K^2$ user--stream pairs (due to summation over all users for each gradient computation), giving complexity $\mathcal{O}(K^2M^2)$ per layer per manifold iteration. Therefore, for $L$ layers and $R_{\rm sim}$ Riemannian iterations, the SIM update requires $\mathcal{O}(R_{\rm sim}LK^2M^2)$. Fourth, the UAV update requires evaluating the channel-gradient terms for all users, streams, and SIM elements, resulting in complexity $\mathcal{O}(I_qK^2M)$, where $I_q$ denotes the number of SCA/backtracking iterations. Combining all steps, the complexity per outer iteration is given as
\begin{equation}
\mathcal{O}\!\left(
I_{\nu}K^3
+
R_{\rm sim}LK^2M^2
+
I_qK^2M
\right).
\end{equation}
For large SIM deployments, the term $\mathcal{O}(R_{\rm sim}LK^2M^2)$ dominates, indicating that the SIM phase optimization is the main computational bottleneck of the proposed framework.
 \begin{figure}
    \centering
    \includegraphics[width=0.8\linewidth]{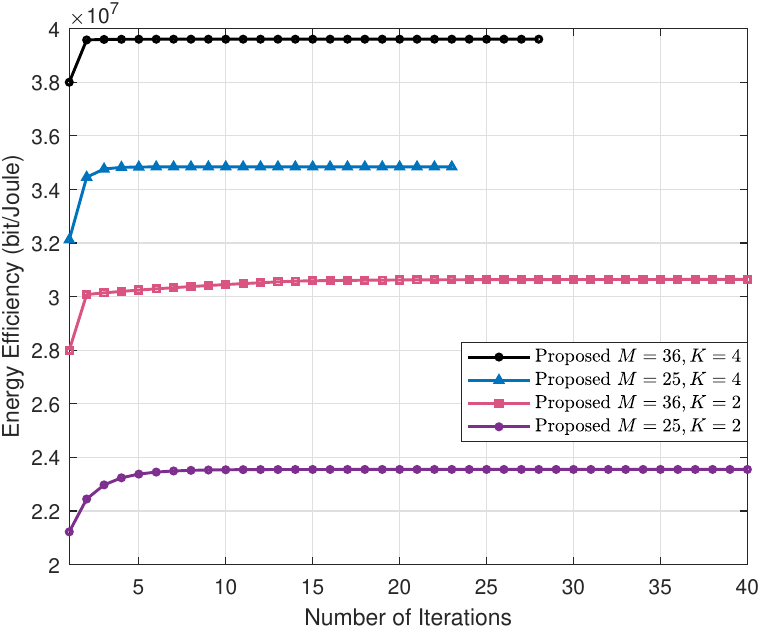}
    \caption{Convergence behaviour of the proposed method with SIM size $M=36,25$, with number of users $K=2, 4$ and number of layers $L=2$.}
    \label{fig:placeholder}
\end{figure}
\section{Numericals Results and Discussion} \label{sec4}
We consider the UAV-assisted multi-user system operating at a carrier frequency of $f_c = 28$ GHz with a bandwidth of $B = 10$ MHz. The speed of light is set to $c = 3 \times 10^8$ m/s, yielding a wavelength $\lambda = c/f_c$ and wavenumber $k_0 = 2\pi/\lambda$. The network consists of $K=2$ or $4$ single-antenna users, uniformly distributed over a square area of $[-30m,30m]$  in both horizontal dimensions, while the UAV position is constrained within $x,y \in [-30m,30m]$ and altitude $h \in [15m,30m]$. A Rician fading channel is assumed with a Rician factor $K_k = 10$ dB. The LoS component follows a distance-dependent path-loss model $
\beta(d) = \beta_0 \left(\frac{d}{d_0}\right)^{-\alpha}$,
where the reference distance is set to $d_0 = 1$ m, and the reference gain is given by 
$\beta_0 = \left(\frac{\lambda}{4\pi d_0}\right)^2$, which corresponds to approximately $-61.38$ dB at $f_c=28$ GHz and the path-loss exponent is set to $\alpha=2.3$.
The SIM architecture employs $M=36$ or $25$ elements arranged in a uniform planar array with half-wavelength spacing $\lambda/2$, and the feed spacing is also $\lambda/2$.  The inter-layer spacing of SIM is set to $d_t = \lambda/8$, and the effective element area is set to $A_t = (\lambda/4)^2$. The thermal noise power is computed using a noise spectral density of $-174$ dBm/Hz.  The power consumption model includes a power amplifier inefficiency factor $\xi_{\text{PA}}=1.1$, RF chain power $P_{\text{RF}}=0.4$ W per chain, baseband power $P_{\text{BB}}=4.5$ W, element-wise power consumption $P_E=0.01$ W per element, and control power $P_{\text{ctrl}}=0.316$ W. In the following, unless otherwise specified, the number of SIM layer is set to $L=2$. Furthermore, the number of RF chains for SIM-assisted UAV system is fixed equal to the number of users. The plotted results are averaged over $50$ independent channel realizations.


\begin{figure*}
    \centering
 \begin{minipage}{0.45\textwidth}
      \centering
    \includegraphics[width=\linewidth]{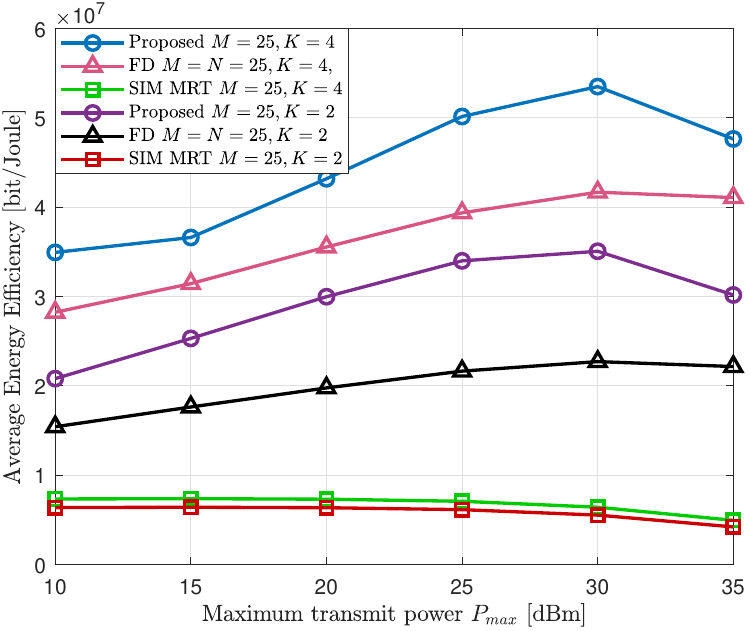}
    \caption{EE as a function of transmit power with layers $L=2$ and $M=25$.}
    \label{fig3}
\end{minipage}  
      \begin{minipage}{0.45\textwidth}
      \centering
    \includegraphics[width=\linewidth]{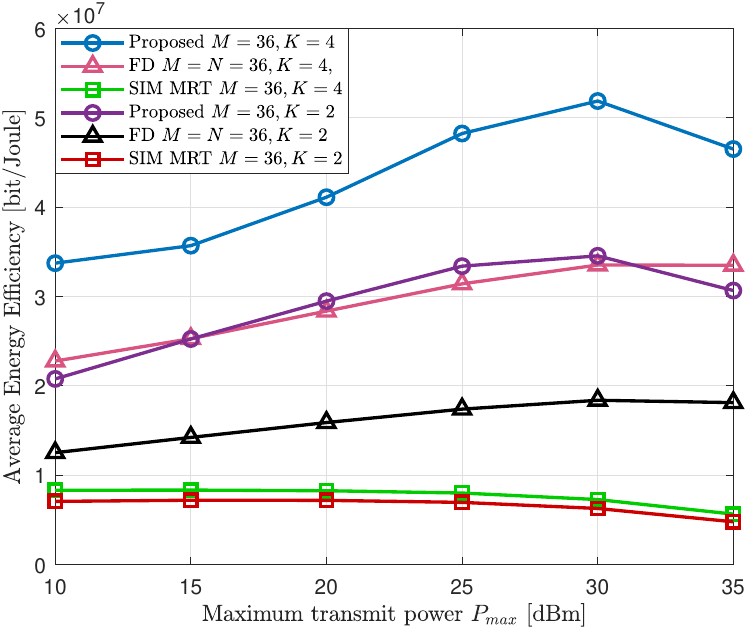}
    \caption{EE as a function of transmit power with layers $L=2$ and $M=36$.}
    \label{fig4}
    \end{minipage}  
\end{figure*}
For performance comparison, we consider the following benchmark schemes.  \emph{1) Fully Digital (FD) Scheme:} In this baseline, a fully digital architecture is assumed where the number of RF chains is equal to the number of transmit antennas, i.e., $N=M$, with $M$ denoting the number of elements per SIM layer. This configuration enables unconstrained digital beamforming without any hardware limitations imposed by the SIM structure. \emph{2) SIM-MRT Scheme:} In this scheme, the SIM structure is retained and jointly optimized with the UAV position, while the digital beamforming is designed according to the MRT principle, given the effective channels including the SIM response.

\begin{figure*}
    \centering
 \begin{minipage}{0.45\textwidth}
      \centering
    \includegraphics[width=\linewidth]{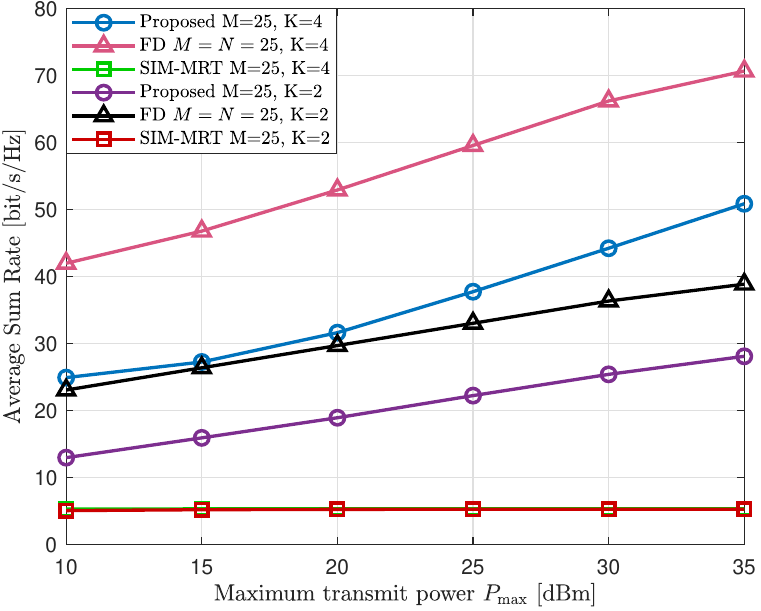}
    \caption{Sum-rate as a function of transmit power with layers $L=2$ and $M=25$.}
    \label{fig5}
\end{minipage}  
      \begin{minipage}{0.45\textwidth}
      \centering
    \includegraphics[width=\linewidth]{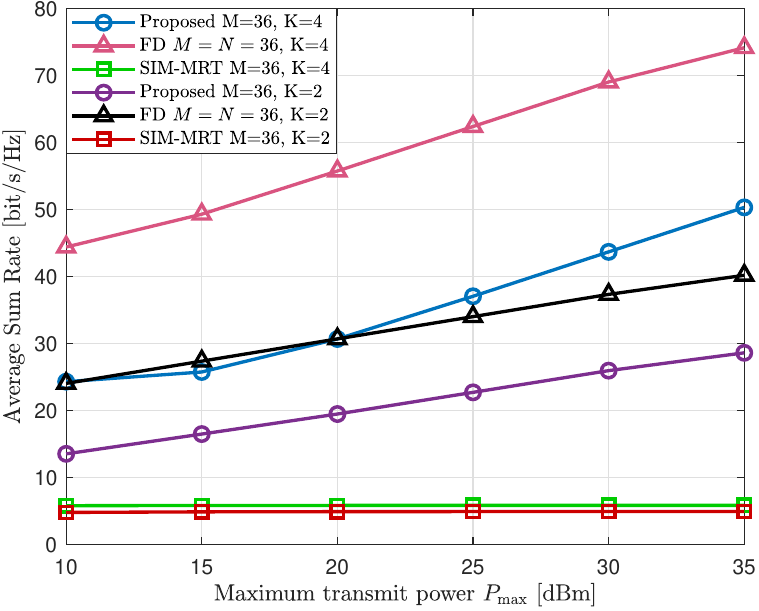 }
    \caption{Sum-rate as a function of transmit power with layers $L=2$ and $M=36$.}
    \label{fig6}
    \end{minipage}  
\end{figure*}

 Figures \ref{fig3} and \ref{fig4} illustrate the average EE versus the maximum transmit power for two SIM configurations with $M=36$ and $M=25$ elements per layer, respectively. Overall, both figures exhibit similar trends, while the larger SIM size yields consistently higher performance due to improved beamforming gain, resulting in greater EE. The proposed scheme significantly outperforms both benchmarks across all transmit power levels and for both $K=4$ and $K=2$ users. For instance, in the case of $M=36$ and $K=4$, the proposed scheme achieves approximately $5.4\times 10^7$ bit/Joule at $P_{\max}=30$ dBm, compared to about $3.4\times 10^7$ bit/Joule for the FD scheme, corresponding to a gain of $\sim 58\%$. For $K=2$, the improvement is also substantial, with the proposed method achieving around $3.5\times 10^7$ bit/Joule versus $2.3\times 10^7$ bit/Joule for FD, yielding an approximate gain of $\sim50\%$. In contrast, the SIM-MRT scheme performs significantly worse, remaining below $1\times 10^7$ bit/Joule, which implies a performance gap exceeding $700\%$ for the proposed design. A similar behavior is observed for $M=25$, although the performance is slightly reduced due to the smaller number of elements. Specifically, for $K=4$, the proposed scheme reaches approximately $5.2\times 10^7$ bit/Joule at $P_{\max}=30$ dBm, while the FD scheme achieves around $4.2\times 10^7$ bit/Joule, corresponding to a gain of about $20\%$–$25\%$. For $K=2$, the gain remains more pronounced, with the proposed scheme outperforming FD by roughly $45\%$–$50\%$. Again, the SIM-MRT baseline remains significantly inferior, with performance values below $0.6\times 10^7$ bit/Joule, leading to an order-of-magnitude gap compared to the proposed scheme.  For both SIM sizes, the EE increases with transmit power up to $30$ dBm and then starts decreasing. This behavior is due to the fact that, beyond this point, the marginal gain in achievable rate becomes limited by inter-user interference and channel saturation, while the total power consumption continues to increase. As a result, the EE, defined as the ratio between rate and power consumption, is no longer improved and begins to degrade.

\begin{figure*}
    \centering
 \begin{minipage}{0.45\textwidth}
      \centering
    \includegraphics[width=\linewidth]{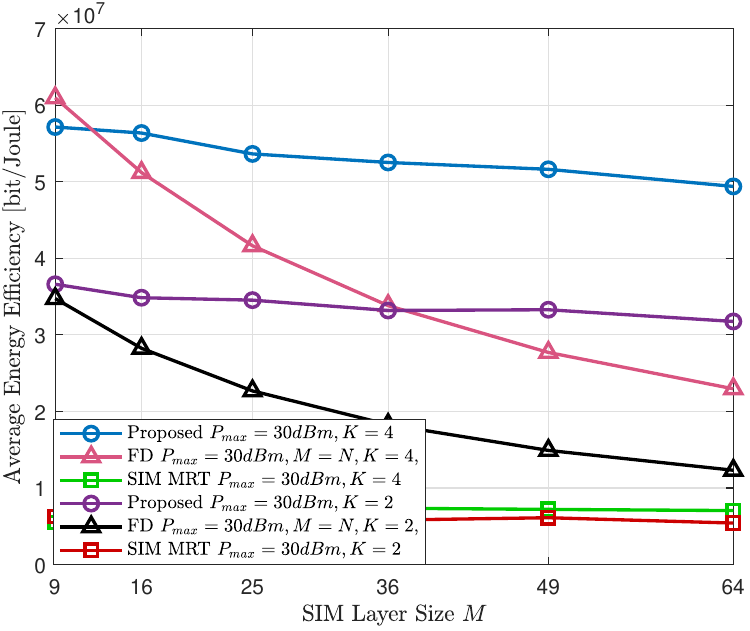}
    \caption{EE as a function of SIM layer size $M$ with $L=2$ and $P_{max}=30$ dBm.}
    \label{fig7}
\end{minipage}  
      \begin{minipage}{0.45\textwidth}
      \centering
    \includegraphics[width=\linewidth]{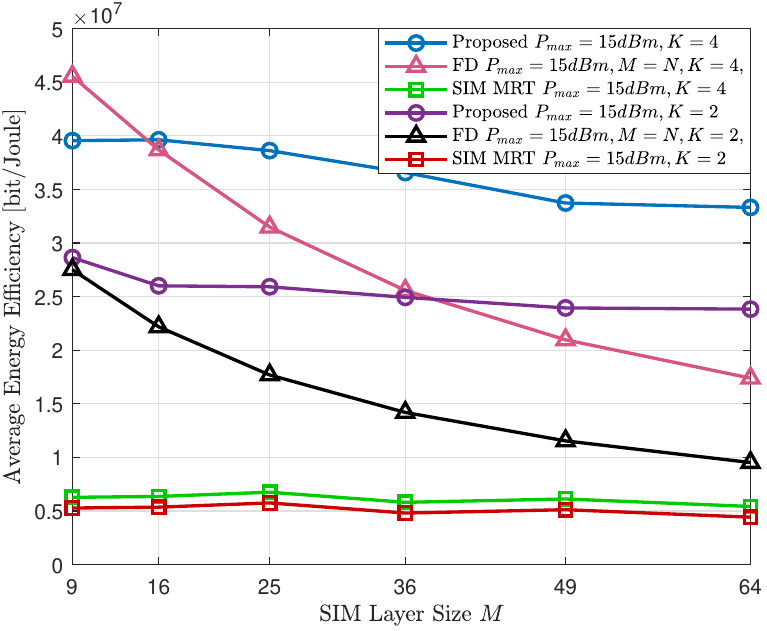}
    \caption{EE as a function of SIM layer size $M$ with $L=2$ and $P_{max}=15$ dBm.}
    \label{fig8}
    \end{minipage}  
\end{figure*}

Figures \ref{fig5} and \ref{fig6} present the average sum rate obtained under the EE maximization framework for both SIM configurations with $M=36$ and $M=25$. In contrast to the EE behavior, the sum rate increases monotonically with the maximum transmit power for all schemes, since higher power directly improves the received signal strength. However, an important observation is that the proposed scheme, while designed to maximize EE, still achieves competitive sum-rate performance and consistently outperforms the SIM-MRT baseline. For instance, in the case of $M=36$ and $K=4$, the proposed method increases from approximately $25$ bit/s/Hz at $P_{\max}=10$ dBm to about $50$ bit/s/Hz at $35$ dBm, whereas the SIM-MRT scheme remains nearly constant around $6$ bit/s/Hz, confirming its inability to exploit additional transmit power due to interference limitations. The FD benchmark achieves the highest sum rate, reaching nearly $75$ bit/s/Hz, as expected from its fully flexible digital beamforming capability. Nevertheless, the performance gap between the proposed and FD schemes is moderate, demonstrating that the proposed SIM-based architecture can achieve a favorable trade-off between hardware complexity and rate performance. It is noteworthy that the MRT performance remains nearly constant because it does not mitigate inter-user interference, causing the SINR—and hence the sum rate—to saturate despite increasing transmit power.
Similar trends are observed for $M=25$, where the proposed scheme attains $\sim51$ bit/s/Hz at $35$ dBm compared to approximately $71$ bit/s/Hz for the FD scheme, corresponding to a gap of about $25\%$–$30\%$, while still significantly outperforming the SIM-MRT scheme by more than $700\%$. Furthermore, increasing the number of users from $K=2$ to $K=4$ leads to substantial sum-rate improvements due to multiplexing gains, which are effectively captured by the proposed design. Overall, these results highlight that maximizing EE does not severely compromise the achievable sum rate, and the proposed method provides a well-balanced performance compared to both high-complexity and low-complexity benchmarks.

 \begin{figure*}
    \centering
 \begin{minipage}{0.45\textwidth}
      \centering
    \includegraphics[width=\linewidth]{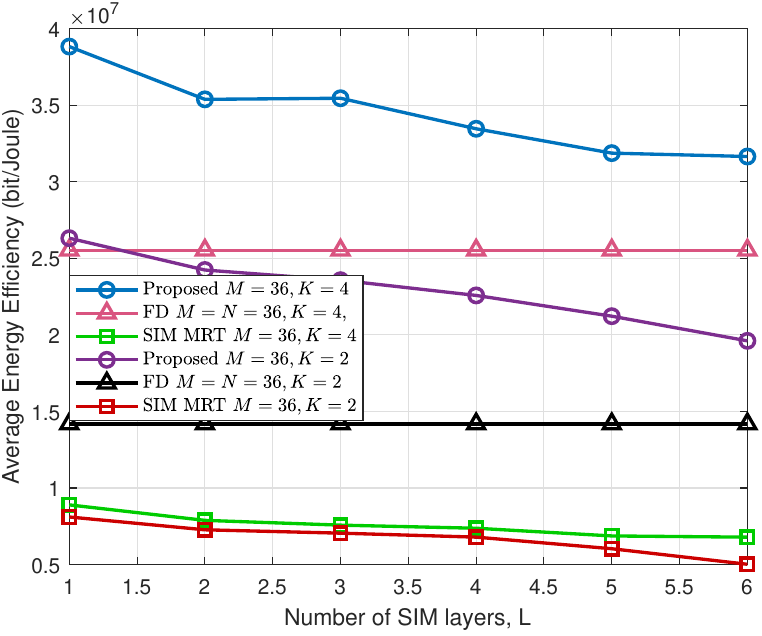}
    \caption{EE as a function of the number of SIM layers $L$, with $M=36$ and $P_{max}=15$ dBm.}
    \label{fig9}
\end{minipage}  
      \begin{minipage}{0.45\textwidth}
      \centering
    \includegraphics[width=\linewidth]{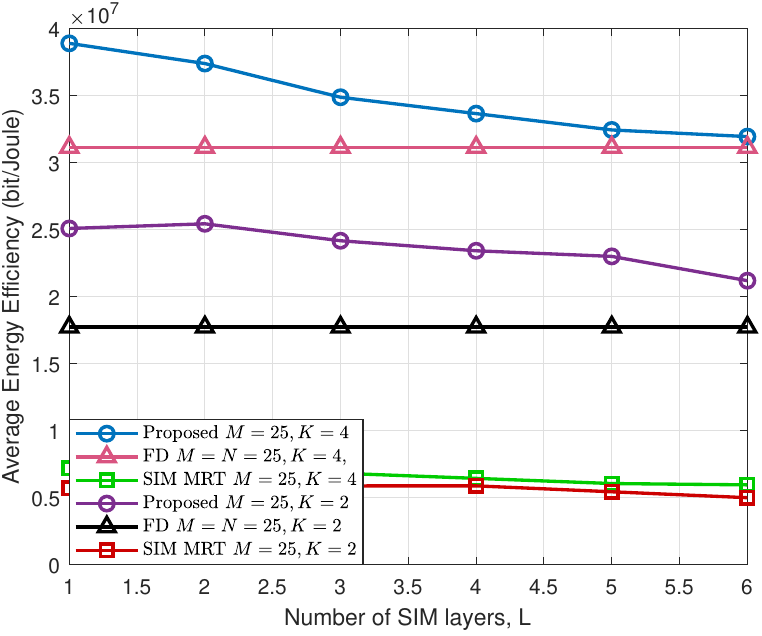}
    \caption{EE as a function of the number of SIM layers $L$, with $M=25$ and $P_{max}=15$ dBm.}
    \label{fig10}
    \end{minipage}  
\end{figure*}

Figures \ref{fig7} and \ref{fig8} illustrate the average EE as a function of the SIM layer size $M$ for two transmit power levels, $P_{\max}=30$ dBm and $P_{\max}=15$ dBm. A common trend observed in both figures is that the proposed scheme maintains a relatively stable performance as $M$ increases, with only a slight decrease in EE. For instance, at $P_{\max}=30$ dBm and $K=4$, the proposed scheme decreases moderately from approximately $5.8\times10^7$ to $5.0\times10^7$ bit/Joule as $M$ increases from $9$ to $64$, indicating that the additional beamforming gain is offset by the increased hardware power consumption. In contrast, the FD benchmark exhibits a sharp degradation with increasing $M$, dropping from $\sim 6.0\times10^7$ to $2.3\times10^7$ bit/Joule for $K=4$, due to the linear increase in RF chain power consumption with $M$. A similar behavior is observed for $K=2$, where the FD scheme decreases more rapidly than the proposed method, highlighting the inefficiency of fully digital architectures at large array sizes under the EE objective. The SIM-MRT scheme remains consistently low and nearly constant across all $M$, confirming that its performance is limited by interference rather than array size. At the lower transmit power of $P_{\max}=15$ dBm, similar trends are observed, albeit at reduced performance levels. The proposed scheme again shows robustness to increasing $M$, with only a gradual decline in EE, while the FD scheme experiences a significant drop as $M$ increases, particularly for $K=2$. For example, for $K=4$, the FD scheme decreases from approximately $4.6\times10^7$ to $1.7\times10^7$ bit/Joule, whereas the proposed method only drops from around $4.0\times10^7$ to $3.4\times10^7$ bit/Joule. These results clearly demonstrate that the proposed SIM-based architecture scales much more favorably with the array size compared to the fully digital baseline. From the results we can conclude that, increasing $M$ does not necessarily improve EE, as the additional circuit power can dominate the rate gains, and the proposed scheme effectively balances this trade-off, achieving superior scalability and robustness compared to both FD and SIM-MRT benchmarks.

Figures \ref{fig9} and \ref{fig10} show the average EE versus the number of SIM layers $L$ for $M=36$ and $M=25$, respectively, at $P_{\max}=15$ dBm. For both SIM sizes, the proposed scheme achieves the highest EE in the multi-user case with $K=4$, confirming the benefit of joint digital beamforming, SIM phase optimization, and UAV placement. However, the EE gradually decreases as $L$ increases. For example, when $M=36$ and $K=4$, the proposed scheme decreases from about $3.9\times 10^7$ bit/Joule at $L=1$ to about $3.2\times 10^7$ bit/Joule at $L=6$, corresponding to an approximate reduction of $18\%$. Similarly, for $M=25$ and $K=4$, the proposed method decreases from about $3.9\times 10^7$ bit/Joule to about $3.2\times 10^7$ bit/Joule, showing a comparable trend. This indicates that adding more SIM layers does not always improve EE, since the additional wave-domain processing gain is eventually outweighed by increased hardware/control power consumption and propagation loss across stacked layers. The FD benchmark remains nearly constant with $L$, as expected, because it does not depend on the SIM layer structure. In contrast, the SIM-MRT scheme remains significantly below the proposed method and exhibits only minor variations, confirming that MRT is limited by inter-user interference and cannot effectively exploit the additional SIM layers. Clearly, the results indicate that a small number of SIM layers is sufficient to obtain most of the energy-efficiency gain, while excessive stacking while improving sum-rate performance may reduce EE due to increased circuit power and diminishing beamforming returns. 

The presented results provide several important insights into the design of SIM-assisted UAV communication systems under an EE objective. First, the proposed joint optimization framework consistently achieves the best performance across all considered scenarios, demonstrating the importance of jointly optimizing the digital precoder, SIM configuration, and UAV position. In particular, the results show that SIM-based architectures can effectively bridge the gap between low-complexity designs and fully digital systems, achieving comparable or even superior EE performance with significantly fewer RF chains. Moreover, increasing the transmit power improves EE only up to a certain point, beyond which the additional power consumption outweighs the marginal rate gains. This highlights the fundamental trade-off inherent in EE maximization and underscores the need for adaptive power utilization rather than simply operating at maximum transmit power.

Furthermore, the impact of system parameters such as the SIM size $M$ and the number of layers $L$ reveals important design guidelines. While increasing the number of elements enhances beamforming capability, the gains in EE are limited due to the associated increase in circuit power, making moderate SIM sizes more favorable. Similarly, adding more SIM layers does not necessarily improve performance, as the additional hardware and propagation losses can offset the benefits of deeper wave-domain processing. In contrast, simple beamforming strategies such as MRT are shown to be highly suboptimal in multi-user scenarios, as they fail to manage inter-user interference and cannot efficiently utilize additional power or structural degrees of freedom. To conclude, these observations confirm that achieving high EE in SIM-assisted UAV systems requires a careful balance between spatial degrees of freedom, hardware complexity, and power consumption, which is effectively captured by the proposed optimization framework.

\section{Conclusions} \label{sec5}

In this paper, we investigated energy-efficient multi-user downlink communications in SIM-equipped UAV systems. By leveraging the wave-domain processing capability of the SIM technology, we developed a hybrid analog and digital transmission framework that significantly reduces RF-chain requirements, while maintaining high beamforming flexibility. We formulated an EE maximization problem that joint optimizes the digital precoder, SIM phase shifts, and UAV 3D position under a realistic hardware-aware power consumption model. To tackle the resulting highly non-convex problem, we proposed a unified optimization framework based on fractional programming, auxiliary variable transformations, to leverage Riemannian manifold optimization, and SCA. The proposed algorithm guarantees monotonic convergence to a stationary solution. Numerical results demonstrated that SIM-equipped UAV systems achieve substantial EE gains over conventional fully digital and MRT-based baselines, while also revealing important trade-offs among transmit power, SIM size, and number of layers.

\ifCLASSOPTIONcaptionsoff
  \newpage
\fi

{\footnotesize
\bibliographystyle{IEEEtran}
\def\baselinestretch{0.9}
\bibliography{main}}

@article{yao2024channel,
  title={Channel estimation for stacked intelligent metasurface-assisted wireless networks},
  author={Yao, Xianghao and An, Jiancheng and Gan, Lu and Di Renzo, Marco and Yuen, Chau},
  journal={IEEE Wireless Commun. Lett.},
  volume={13},
  number={5},
  pages={1349--1353},
  year={2024},
  publisher={IEEE}
}

@article{papazafeiropoulos2025channel,
  title={Channel Estimation for Stacked Intelligent Metasurfaces in Rician Fading Channels},
  author={Papazafeiropoulos, Anastasios and Kourtessis, Pandelis and Kaklamani, Dimitra I and Venieris, Iakovos S},
  journal={IEEE Wireless Commun. Lett.},
  year={2025},
  publisher={IEEE}
}

@article{dong2026deep,
  title={Deep Learning-Based Channel Estimation for Stacked Intelligent Metasurface-Enhanced Multi-User Communications},
  author={Dong, Xin and Chen, Chen and Yu, Gang and Zhou, Lingyou and Yuan, Chenyang and Zhang, Jie},
  journal={IEEE Trans. Veh. Technol.},
  year={2026},
  publisher={IEEE}
}

@article{sheemar2022practical,
  title={Practical hybrid beamforming for millimeter wave massive {MIMO} full duplex with limited dynamic range},
  author={Sheemar, Chandan Kumar and Thomas, Christo Kurisummoottil and Slock, Dirk},
  journal={IEEE Open Journal of the Communications Society},
  volume={3},
  pages={127--143},
  year={2022},
  publisher={IEEE}
}

@article{sheemar2026survey,
  title={A Survey on Stacked Intelligent Metasurfaces: Fundamentals, Recent Advances, and Challenges},
  author={Sheemar, Chandan Kumar and Khan, Wali Ullah and Solanki, Sourabh and Alexandropoulos, George C and Chatzinotas, Symeon},
  journal={arXiv preprint arXiv:2603.05633},
  year={2026}
}

@article{sheemar2026joint,
  title={Joint beamforming and {3D} location optimization for multi-user holographic {UAV} communications},
  author={Sheemar, Chandan Kumar and Mahmood, Asad and Thomas, Christo Kurisummoottil and Alexandropoulos, George C and Querol, Jorge and Chatzinotas, Symeon and Saad, Walid},
  journal={IEEE Transactions on Communications},
  year={2026},
  publisher={IEEE}
}

@article{an2023stacked,
  title={Stacked intelligent metasurfaces for efficient holographic {MIMO} communications in {{6G}}},
  author={An, Jiancheng and Xu, Chao and Ng, Derrick Wing Kwan and Alexandropoulos, George C and Huang, Chongwen and Yuen, Chau and Hanzo, Lajos},
  journal={IEEE J. Sel. Areas Commun.},
  volume={41},
  number={8},
  pages={2380--2396},
  year={2023}
}

@article{an2024stacked_WC,
  title={Stacked intelligent metasurface-aided {MIMO} transceiver design},
  author={An, Jiancheng and Yuen, Chau and Xu, Chao and Li, Hongbin and Ng, Derrick Wing Kwan and Di Renzo, Marco and Debbah, M{\'e}rouane and Hanzo, Lajos},
  journal={IEEE Wirel. Commun.},
  volume={31},
  number={4},
  pages={123--131},
  year={2024},
  publisher={IEEE}
}

@article{lawal2025channel,
  title={Channel estimation for stacked intelligent metasurface-aided network using deep learning},
  author={Lawal, Abdulmajid and Zerguine, Azzedine and Nasir, Ali A and Abed-Meraim, Karim},
  journal={IEEE Commun. Lett.},
  year={2025},
  publisher={IEEE}
}

@article{papazafeiropoulos2025ergodic_outage_3,
  title={Ergodic Mutual Information and Outage Probability for {{SIM}}-Assisted Holographic {{MIMO}} Communications},
  author={Papazafeiropoulos, Anastasios and Kourtessis, Pandelis and Kaklamani, Dimitra I and Venieris, Iakovos S},
  journal={IEEE Trans. Veh. Technol.},
  year={2025},
  publisher={IEEE}
}

@article{papazafeiropoulos2025performance,
  title={Performance of double-stacked intelligent metasurface-assisted multiuser massive {{MIMO}} communications in the wave domain},
  author={Papazafeiropoulos, Anastasios and Kourtessis, Pandelis and Chatzinotas, Symeon and Kaklamani, Dimitra I and Venieris, Iakovos S},
  journal={IEEE Trans. Wireless Commun.},
  year={2025},
  publisher={IEEE}
}

@article{papazafeiropoulos2024achievable,
  title={Achievable rate optimization for large stacked intelligent metasurfaces based on statistical {CSI}},
  author={Papazafeiropoulos, Anastasios and Kourtessis, Pandelis and Chatzinotas, Symeon and Kaklamani, Dimitra I and Venieris, Iakovos S},
  journal={IEEE Wireless Commun. Lett.},
  volume={13},
  number={9},
  pages={2337--2341},
  year={2024},
  publisher={IEEE}
}

@article{papazafeiropoulos2024achievable_2,
  title={Achievable rate optimization for stacked intelligent metasurface-assisted holographic {{MIMO}} communications},
  author={Papazafeiropoulos, Anastasios and An, Jiancheng and Kourtessis, Pandelis and Ratnarajah, Tharmalingam and Chatzinotas, Symeon},
  journal={IEEE Trans. Wireless Commun.},
  volume={23},
  number={10},
  pages={13173--13186},
  year={2024},
  publisher={IEEE}
}

@inproceedings{zarini2025orchestration,
  title={On the Orchestration of {SIM} and {UAV}},
  author={Zarini, Hosein and Kazemi, Seyed Mohsen and An, Jiancheng and Sookhak, Mehdi and Choi, Jinho},
  booktitle={Proc. IEEE ICC},
  pages={2913--2918},
  year={2025}
}

@article{liu2023deployment,
  title={Deployment and robust hybrid beamforming for {UAV} mmWave communications},
  author={Liu, Ke and Liu, Yanming and Yi, Pengfei and Xiao, Zhenyu and Xia, Xiang-Gen},
  journal={IEEE Transactions on Communications},
  volume={71},
  number={5},
  pages={3073--3086},
  year={2023},
  publisher={IEEE}
}

@article{sheemar2025joint,
  title={Joint holographic beamforming and user scheduling with individual {QoS} constraints},
  author={Sheemar, Chandan Kumar and Thomas, Christo Kurisummoottil and Alexandropoulos, George C and Querol, Jorge and Chatzinotas, Symeon and Saad, Walid},
  journal={IEEE Transactions on Vehicular Technology},
  year={2025},
  publisher={IEEE}
}

@article{khan2024reconfigurable,
  title={Reconfigurable intelligent surfaces for {6G} non-terrestrial networks: Assisting connectivity from the sky},
  author={Khan, Wali Ullah and Mahmood, Asad and Sheemar, Chandan Kumar and Lagunas, Eva and Chatzinotas, Symeon and Ottersten, Bj{\"o}rn},
  journal={IEEE Internet of Things Magazine},
  volume={7},
  number={1},
  pages={34--39},
  year={2024},
  publisher={IEEE}
}

@article{li2025ris,
  title={{RIS}-based physical layer security for integrated sensing and communication: A comprehensive survey},
  author={Li, Yongxiao and Khan, Feroz and Ahmed, Manzoor and Soofi, Aized Amin and Khan, Wali Ullah and Sheemar, Chandan Kumar and Asif, Muhammad and Han, Zhu},
  journal={IEEE Internet of Things Journal},
  year={2025},
  publisher={IEEE}
}

@article{liu2021reconfigurable,
  title={Reconfigurable intelligent surfaces: Principles and opportunities},
  author={Liu, Yuanwei and Liu, Xiao and Mu, Xidong and Hou, Tianwei and Xu, Jiaqi and Di Renzo, Marco and Al-Dhahir, Naofal},
  journal={IEEE communications surveys \& tutorials},
  volume={23},
  number={3},
  pages={1546--1577},
  year={2021},
  publisher={IEEE}
}

@article{elmossallamy2020reconfigurable,
  title={Reconfigurable intelligent surfaces for wireless communications: Principles, challenges, and opportunities},
  author={ElMossallamy, Mohamed A and Zhang, Hongliang and Song, Lingyang and Seddik, Karim G and Han, Zhu and Li, Geoffrey Ye},
  journal={IEEE Transactions on Cognitive Communications and Networking},
  volume={6},
  number={3},
  pages={990--1002},
  year={2020},
  publisher={IEEE}
}

@article{huang2019reconfigurable,
  title={Reconfigurable intelligent surfaces for energy efficiency in wireless communication},
  author={Huang, Chongwen and Zappone, Alessio and Alexandropoulos, George C and Debbah, M{\'e}rouane and Yuen, Chau},
  journal={IEEE Transactions on Wireless Communications},
  volume={18},
  number={8},
  pages={4157--4170},
  year={2019},
  publisher={IEEE}
}

@inproceedings{sheemar2023full,
  title={Full-duplex-enabled joint communications and sensing with reconfigurable intelligent surfaces},
  author={Sheemar, Chandan Kumar and Alexandropoulos, George C and Slock, Dirk and Querol, Jorge and Chatzinotas, Symeon},
  booktitle={in Proc. IEEE EUSIPCO},
  pages={1509--1513},
  year={2023}
}

@article{sheemar2024parallel,
  title={Parallel and distributed hybrid beamforming for multicell millimeter wave {MIMO} full duplex},
  author={Sheemar, Chandan Kumar and Chatzinotas, Symeon and Slock, Dirk and Lagunas, Eva and Querol, Jorge},
  journal={IEEE Transactions on Vehicular Technology},
  volume={74},
  number={3},
  pages={4289--4306},
  year={2024},
  publisher={IEEE}
}

@article{fan2025joint,
  title={Joint association and phase shifts design for {UAV}-mounted stacked intelligent metasurfaces-assisted communications},
  author={Fan, Mingzhe and Sun, Geng and Pan, Hongyang and Wang, Jiacheng and An, Jiancheng and Du, Hongyang and Yuen, Chau},
  journal={arXiv preprint arXiv:2508.00616},
  year={2025}
}

@inproceedings{khan2024beyond,
  title={Beyond diagonal {IRS} assisted ultra massive THz systems: A low resolution approach},
  author={Khan, Wali Ullah and Sheemar, Chandan Kumar and Abdullah, Zaid and Lagunas, Eva and Chatzinotas, Symeon},
  booktitle={in Proc. IEEE PIMRC},
  pages={1--5},
  year={2024}
}

@article{sheemar2023irs,
  title={{IRS} assisted {MIMO} full duplex: Rate analysis and beamforming under imperfect CSI},
  author={Sheemar, Chandan Kumar and Solanki, Sourabh and Querol, Jorge and Kumar, Sumit and Chatzinotas, Symeon},
  journal={IEEE Open Journal of the Communications Society},
  volume={4},
  pages={1879--1892},
  year={2023},
  publisher={IEEE}
}

@inproceedings{an2024hybrid,
  title={Hybrid digital-wave domain channel estimator for stacked intelligent metasurface enabled multi-user {MISO} systems},
  author={An, Jiancheng and Chaaban, Anas and others},
  booktitle={Proc. IEEE WCNC},
  pages={1--6},
  year={2024},
  organization={}
}

@article{ginige2025nested,
  title={Nested tensor-based channel estimation for stacked intelligent metasurface-assisted wireless networks},
  author={Ginige, Nipuni and De Sena, Arthur Sousa and Mahmood, Nurul Huda and Di Renzo, Marco and Rajatheva, Nandana and Latva-Aho, Matti},
  journal={IEEE Commun. Lett.},
  year={2025},
  publisher={IEEE}
}

@inproceedings{yao2024sparse,
  title={Sparse channel estimation for stacked intelligent metasurface-assisted mmWave communications},
  author={Yao, Xianghao and An, Jiancheng and Huang, Guojun and Liu, Hao and Gan, Lu and Yuen, Chau},
  booktitle={Proc. IEEE VTS APWCS},
  pages={},
  year={2024},
  address={Singapore},
  organization={}}

@inproceedings{yao2025sparse,
  title={Sparse Bayesian Learning Based Channel Estimation for {SIM}-Assisted Near-Field Communications},
  author={Yao, Xianghao and An, Jiancheng and Gan, Lu and Clerckx, Bruno and Di Renzo, Marco},
  booktitle={Proc. IEEE ICC},
  pages={},
  year={2025},
  address={Montreal, Canada},
  organization={}
}

@article{bahingayi2025refined,
  title={A refined alternating optimization for sum rate maximization in {SIM}-aided multiuser {{MISO}} systems},
  author={Bahingayi, Eduard E and Lin, Shuying and Uysal, Murat and Di Renzo, Marco and Tran, Le-Nam},
  journal={IEEE Wireless Commun. Lett.},
  volume={15},
  pages={1250--1254},
  year={2025},
  publisher={IEEE}
}

@article{papazafeiropoulos2025ergodic_HMIMO,
  title={On the Ergodic Capacity for {SIM}-Aided Holographic {MIMO} Communications},
  author={Papazafeiropoulos, Anastasios and Bartsiokas, Ioannis and Kaklamani, Dimitra I and Venieris, Iakovos S},
  journal={IEEE Wireless Commun. Lett.},
  volume={15},
  pages={1120--1124},
  year={2025},
  publisher={IEEE}
}

@article{darsena2025design,
  title={Design of stacked intelligent metasurfaces with reconfigurable amplitude and phase for multiuser downlink beamforming},
  author={Darsena, Donatella and Verde, Francesco and Iudice, Ivan and Galdi, Vincenzo},
  journal={IEEE Open J. Commun. Soc.},
  year={2025},
  publisher={IEEE}
}

@inproceedings{darsena2024downlink,
  title={Downlink Sum-Rate Maximization for Joint Active and Passive Stacked Intelligent Metasurfaces},
  author={Darsena, Donatella and Verde, Francesco and Iudice, Ivan and Galdi, Vincenzo},
  booktitle={Proc. IEEE VCC},
  pages={},
  address={Virtual Conference},
  year={2024},
  organization={}
}

@article{bahingayi2025scaling,
  title={Scaling achievable rates in {SIM}-aided {MIMO} systems with metasurface layers: A hybrid optimization framework},
  author={Bahingayi, Eduard E and Perovi{\'c}, Nemanja Stefan and Tran, Le-Nam},
  journal={IEEE Wireless Commun. Lett.},
  year={2025},
  publisher={IEEE}
}

@article{xia2025statistical,
  title={Statistical {CSI}-Enabled Optimization for Beyond Diagonal {SIM}},
  author={Xia, Qiu and Zhang, Jun and Xu, Kaizhe and Ma, Shaodan and Jin, Shi and Yuen, Chau},
  journal={IEEE Wireless Commun. Lett.},
  year={2025},
  publisher={IEEE}
}

@inproceedings{rezvani2025uplink,
  title={Uplink wave-domain combiner for stacked intelligent metasurfaces accounting for hardware limitations},
  author={Rezvani, Maryam and Adve, Raviraj and bin Sediq, Akram and El-Keyi, Amr},
  booktitle={Proc. IEEE ICC},
  pages={},
  address={Montreal, Canada},
  year={2025},
  organization={}
}

@article{ginige2025max,
  title={Max-Min Fairness for Stacked Intelligent Metasurface-Assisted Multi-User {{MISO}} Systems},
  author={Ginige, Nipuni and Dharmawansa, Prathapasinghe and de Sena, Arthur Sousa and Mahmood, Nurul Huda and Rajatheva, Nandana and Latva-aho, Matti},
  journal={arXiv preprint arXiv:2504.14584},
  year={2025}
}

@article{fang2025stacked,
  title={Stacked Intelligent Metasurface Assisted Multiuser Communications: From a Rate Fairness Perspective},
  author={Fang, Junjie and Zhang, Chao and An, Jiancheng and Yu, Hongwen and Wu, Qingqing and Debbah, M{\'e}rouane and Yuen, Chau},
  journal={IEEE Trans. Commun.},
  volume={74},
  pages={1253--1268},
  year={2025},
  publisher={IEEE}
}

@article{dinkelbach1967nonlinear,
  title={On nonlinear fractional programming},
  author={Dinkelbach, Werner},
  journal={Management science},
  volume={13},
  number={7},
  pages={492--498},
  year={1967},
  publisher={INFORMS}
}

@article{shen2018fractional,
  title={Fractional programming for communication systems—Part I: Power control and beamforming},
  author={Shen, Kaiming and Yu, Wei},
  journal={IEEE Transactions on Signal Processing},
  volume={66},
  number={10},
  pages={2616--2630},
  year={2018},
  publisher={IEEE}
}

@article{shi2011iteratively,
  title={An iteratively weighted MMSE approach to distributed sum-utility maximization for a {MIMO} interfering broadcast channel},
  author={Shi, Qingjiang and Razaviyayn, Meisam and Luo, Zhi-Quan and He, Chen},
  journal={IEEE Transactions on Signal Processing},
  volume={59},
  number={9},
  pages={4331--4340},
  year={2011},
  publisher={IEEE}
}

@book{absil2009optimization,
  title={Optimization algorithms on matrix manifolds},
  author={Absil, P-A and Mahony, Robert and Sepulchre, Rodolphe},
  year={2008},
  publisher={Princeton University Press}
}

@article{karthiga2026multistart,
  title={Multistart Projected Gradient Descent Optimization with Armijo Backtracking-based Pre-chirp tuning for PAPR Reduction in AFDM},
  author={Karthiga, M and Deepa, D},
  journal={Physical Communication},
  pages={103035},
  year={2026},
  publisher={Elsevier}
}

@article{zhong2024ris,
  title={{RIS}-aided beamforming design for {MIMO} systems via unified manifold optimization},
  author={Zhong, Kai and Hu, Jinfeng and Li, Huiyong and Wang, Ren and An, Dongxu and Zhu, Gangyong and Teh, Kah Chan and Pan, Cunhua and Eldar, Yonina C},
  journal={IEEE Transactions on Vehicular Technology},
  volume={74},
  number={1},
  pages={674--685},
  year={2024},
  publisher={IEEE}
}

@article{paulsen1989schur,
  title={Schur products and matrix completions},
  author={Paulsen, Vern I and Power, Stephen C and Smith, Roger R},
  journal={Journal of functional analysis},
  volume={85},
  number={1},
  pages={151--178},
  year={1989},
  publisher={Elsevier}
}

\end{document}